\newcommand{\citet}[1]{\citeauthor{#1}~\shortcite{#1}}
\newcommand{\supertiny}[1]{{\fontsize{4.5pt}{4.5pt}\selectfont {#1}}}
\newcommand\blfootnote[1]{%
\begin{NoHyper}
  \begingroup
  \renewcommand\thefootnote{}\footnote{#1}%
  \addtocounter{footnote}{-1}%
  \endgroup
  \end{NoHyper}
}
\DeclareMathOperator*{\argmax}{arg\,max}
\DeclareMathOperator*{\argmin}{arg\,min}
\begin{document}
\title{Contested Logistics: A Game-Theoretic Approach}
%
%
\author{
Jakub \v{C}ern\'{y}$^{\dagger,}$\inst{1} \and
Chun Kai Ling\inst{1} \and
Darshan Chakrabarti\inst{1} \and 
Jingwen Zhang\inst{1} \and
Gabriele Farina\inst{2} \and
Christian Kroer\inst{1} \and
Garud Iyengar\inst{1}
}
\authorrunning{J. \v{C}ern\'{y} et al.}
%
\institute{Columbia University, New York, NY 10027, USA, \email{\{jakub.cerny, chunkai.ling, darshan.chakrabarti, jz3093, christian.kroer, gi10\}@columbia.edu}
\and
MIT, Cambridge, MA 02139, USA,
\email{gfarina@mit.edu}}
\maketitle              
\begin{abstract}
We introduce Contested Logistics Games, a variant of logistics problems that account for the presence of an adversary that can disrupt the movement of goods in selected areas. We model this as a large two-player zero-sum one-shot game played on a graph representation of the physical world, with the optimal logistics plans described by the (possibly randomized) Nash equilibria of this game.
Our logistics model is fairly sophisticated, and is able to handle multiple modes of transport and goods, accounting for possible storage of goods in warehouses, as well as Leontief utilities based on demand satisfied. We prove computational hardness results related to equilibrium finding and propose a practical double-oracle solver based on solving a series of best-response mixed-integer linear programs. We experiment on both synthetic and real-world maps, demonstrating that our proposed method scales to reasonably large games. We also demonstrate the importance of explicitly modeling the capabilities of the adversary via ablation studies and comparisons with a naive logistics plan based on heuristics.
\blfootnote{Equal contribution between J. \v{C}ern\'{y} and C. K. Ling. $^{\dagger}$Corresponding author.}
\keywords{Logistics  \and Game theory \and Equilibrium computation.}
\end{abstract}

\section{Introduction}

Logistics is a multi-million dollar business with applications in numerous real-world domains. In this paper, we study a variant we call Contested Logistics (CL). CL features two players, customarily identified with the names \textit{Blue} and \textit{Red}. Blue is the logistics player, while Red is an interdiction player seeking to reduce Blue's utility. CL captures the strategic interaction between Red and Blue as a two-player zero-sum one-shot game. A solution to the game is identified by the \textit{Nash equilibrium} (NE) solution concept.

CL is motivated by military considerations, where logistics may be disrupted by an adversary, and robustness considerations, where logistics may be disrupted by acts of God, unforeseen failures, political instability, or other factors. Attacks on supply lines have been extensively documented in real military settings, and are often viewed as more effective than direct kinetic confrontation \cite{mcmahon2017maritime}. Similarly, geopolitical powers such as the US, China, and the EU seek to diversify their supply chains with the intention of being robust against a possible outbreak of hostilities \cite{cowen2010geography}. Likewise, the recent Evergreen Suez canal blockage incident is a painful reminder of the potential risks of having a single-point of failure \cite{lee2021suez}.

While the presence of adversaries in logistics is not new \cite{blom2020inventory}, the CL model differs from prior work in that (i) it does not assume a particular behavior of the adversary, instead allowing Red to act in a manner that most hurts Blue, and (ii) we allow for very dramatic attacks by Red, completely destroying routes or segments of railroads, as opposed to relatively tame effects like reducing a route's capacity or introducing small uncertainties in supply or demand.

The inclusion of Red introduces game-theoretic considerations. Since Blue's logistics and Red's interdiction plans are chosen simultaneously, the resultant Nash equilibrium is typically randomized. Additionally, the computation of the equilibrium poses significant challenges. For instance, the number of possible logistics plans is doubly exponential, while the number of interdiction plans grows exponentially with Red's budget. Thus, explicitly specifying the CL problem as a zero-sum bimatrix game is not practical. Our main contributions are as follows: 
\begin{itemize}
\item We formally propose the framework of Contested Logistics (CL) games, a novel variant of logistics planning that accounts for Red's capabilities. Our min-max formulation \textit{explicitly} models Red actively seeking to thwart Blue, via relatively drastic measures compared to prior work. We show that an optimal strategy exists for both players via von Neumann's minimax theorem. 
\item We prove that computing a Nash eq., as well as best responses of Red and Blue, are \NP{}-hard problems. Nonetheless, the best responses of Blue (respectively, Red) to a fixed randomized Red (resp., Blue) strategy can be written compactly as a polynomial-size mixed integer linear program (MILP). 
\item We propose solving CL games via a double oracle method, utilizing our best-response MILPs. We demonstrate scalability via experiments.  
\item  We conduct experiments using \textit{real-world} inspired scenarios, observing the following. (i) Optimal solutions to the CL problem exhibit counter-intuitive behavior, providing insights into what the solution to the CL problem may look like in practice. (ii) A na\"ive, heuristics-based approach for Blue results in a highly exploitable strategy, suggesting that explicitly accounting for Red's capabilities is important. (iii) The cost of overestimating Red's capabilities (\textit{i.e.}, budget) is relatively low, but conversely, underestimating Red's capabilities leads to a drastic decrease in performance, reaffirming the adage that ``it is better to be safe than sorry.''
\end{itemize}

\section{Related Work}

This paper is related to several fields spanning across disciplines. We concentrate on the fields most pertinent to our game-theoretic model. For traditional (non-adversarial) logistics, refer to the established literature~\cite{pfohl1998logistics,ghiani2004introduction,daganzo2005logistics}.

\subsubsection{Logistics and Routing Models}

Logistics in a contested environment, where adversaries actively interfere with supply chain operations, has been explored in various contexts, especially within military logistics~\cite{jaiswal2012military,ausseil2020identifying}.  Many existing models assume a simplified model of Red, who acts blindly or with limited information and follows a fixed (deterministic or stochastic) behavior~\cite{barahona2007inventory,salmeron2009stochastic,hill2009overview}. While bi-level optimization is sometimes incorporated, the solutions typically remain deterministic, limiting their ability to adapt to more dynamic adversaries~\cite{bell2015military}.

Vehicle routing problems involve optimizing routes for vehicles delivering goods or services. The literature on (robust) routing strategies is extensive, but the typical sources of uncertainty in these models are costs, demands, time windows, or customers, rather than adversaries~\cite{ordonez2010robust,agra2013robust,sungur2008robust,lee2012robust}. Models involving adversarial elements face similar challenges as those in the logistics literature. They often assume either simplistic probabilistic models~\cite{blom2020inventory,alotaibi2018unmanned} or bi-level models with a single vehicle, as seen in ambush avoidance or hazardous materials transport literature~\cite{erkut1998modeling,salani2010ambush,list1991modeling}. Alternatively, they provide deterministic solutions, as in routing interdiction problems~\cite{church2004identifying,sadati2020r,bidgoli2018arc}.

\subsubsection{Game-Theoretic Models}

Network interdiction games explore optimal arcs in a network for interdiction purposes, initially studied in \cite{wollmer1964removing} and applied in cybersecurity, cyberphysical security, or supply-chain attacks \cite{wollmer1964removing,washburn1995two,smith2020survey,smith2008algorithms}. These models typically focus on disrupting traversal paths without accounting for the coordination required among multiple connectors, crucial in logistics scenarios.

Security games have seen practical applications, with defenders choosing distributions over targets and attackers selecting targets to attack \cite{jain2013security,pita2008armor,shieh2012protect,an2017stackelberg}. The simplest versions of such games enjoy polynomial-time solvers, even in the general-sum case \cite{kiekintveld2009computing,conitzer2006computing}. Many developments have been made to account for large but structured strategy spaces such as defender target schedules \cite{korzhyk2010complexity} and repeated interactions \cite{fang2015security}. While efficient in many cases, they often simplify strategies and lack modeling depth for logistics movement and coordination.

Another notable class of games are extensive-form games (EFGs), which are played on game trees where players decide actions at each information set \cite{shoham2008multiagent}. Notably these were used to generate superhuman poker AIs~\cite{bowling2015heads,brown2018superhuman,brown2019superhuman}. 
For CL settings, EFGs could be used to model sequential CL problems, though the action spaces would become potentially prohibitively large. In this paper we focus on single-shot CL problems, which are easier to model and solve, but sequential CL problems are an interesting future direction.
\section{Contested Logistics}

CL games are played on a directed graph whose nodes correspond to different types of locations---cities, provinces, towns, \emph{et cetera}. There are several types of \textit{packages} (that is, resources) available that may be transported. Some of the nodes are specially designated as \textit{demand}, \textit{supply}, or \textit{warehouse} nodes. At warehouses, packages may be dropped off and stored. To facilitate transportation of packages, there are several \textit{connectors} (for instance, trucks, trains, or planes) which may be used to transport packages between locations; what a connector can carry, its capacity, and where it can traverse, i.e., edges in the graph, are connector-specific. For example, aircraft cannot carry packages that are too heavy, and while trains have a larger capacity than trucks, they are restricted to traversing only railroads.

Given these specifications, the game proceeds as follows. Red chooses a set of edges to \textit{interdict}, subject to budget constraints. Blue then decides what, where, when, and how packages are sent from supply to demand nodes using the connectors available. Blue aims to satisfy as much demand as possible within a specified time horizon. The game is zero-sum, i.e., Red's goal is to minimize demand satisfied. We adopt a two-stage approach for Blue's logistic plans. In the routing phase, Blue selects where each connector should be routed without committing to any loading. The individual routes can (and often are) correlated across connectors, but are chosen \textit{concurrently} with Red's decision of where to interdict. In the loading phase, Blue observes where Red has chosen to interdict, and uses this information to select a suitable load for connectors, \textit{without changing their routes}. Any connector that was interdicted is forbidden from carrying loads after the point of interdiction, but may still be utilized prior to that. This two-phase approach was introduced by~\cite{bertsimas2016power}, and models situations where unlike routing, loading decisions can be changed easily and on-the-fly.
The approach also allows some level of recourse by Blue, while still having a single-shot zero-sum game model, which is preferable from a computational standpoint.

Formally, we represent a CL game as a directed \textit{physical graph} $\mathsf{G} = (\mathsf{V}, \mathsf{E})$. The nodes in $\mathsf{G}$ represent locations in the physical world Blue traverses. The edges $\mathsf{E}$ can be interdicted by Red, affecting Blue's ability to enact their logistics.

\subsection{Blue's Strategy Space}

On the physical graph $\pgraph = (\pvertices, \pedges)$, Blue has a subset of nodes $\warehouses\subseteq\pvertices$ designated as warehouses, where they can store packages that are currently not being moved around. We assume there is at most one warehouse in each node. Each warehouse has an initial (possible zero) supply, given by a non-negative function $\supply:\warehouses\times\packages\to\realp_0$, and a demand for packages, given by a function $\demand:\warehouses\times\packages\to\realp_0$.

Moving the packages is done by a set of connectors $\connectors$. With each connector $c\in\connectors$ there is an associated subset of edges $\pedges_c\subseteq \pedges$ the connector may use to move across the physical graph, and a function $M:\connectors\times\pedges\to\mathbb{Z}^+$ determining how many timesteps does it take the connector to cross an edge. We assume this value is infinite for the edges the connector cannot use. In addition, each connector has a designated initial location given by a function  $\initlocation:\connectors\to \warehouses$, and weight and volume capacities given by functions $W_{\text{max}}:\connectors\to\realp$ and $V_{\text{max}}:\connectors\to\realp$, respectively. Moreover, Blue has a set of package types $\packages$, each with its associated single unit weight $W:\packages\to\realp$, and single unit volume $V:\packages\to\realp$. Finally, all movement happens over a finite number of discrete timesteps $\timesteps = \{0,1,\dots,T\}$.

For a given connector $c \in \connectors$, based on its initial location $\initlocation(c)$, movement speed $M(c,e)$, accessible edges $\pedges_c$, and timesteps $\timesteps$, we unroll the physical graph into individual connector-specific layered directed graphs $\lgraph_c = (\lvertices, \ledges_c)$, where $\lvertices = (\mathsf{V}_t)_{t\in\timesteps}$ is a series of copies of the physical nodes spread across time. For an edge $e\in\ledges$, we denote by $\lvertices^-(e)$ the tail node of $e$ and $\lvertices^+(e)$ the head node. The edges between the individual layers are found by a simple breadth-first search from the connector's initial location. We assume that no connector can cross more than a single edge in the physical graph in one timestep. However, note that the edges can jump layers, in case it takes the connector more than one timestep to cross an edge. Note that this unrolling process \textit{does not} create a exponentially sized tree but a compact layered DAG (see Figure~\ref{fig:unrolled}).

\begin{figure}[t]
    \centering
    \begin{subfigure}[b]{0.3 \linewidth}
    \centering
    \begin{tikzpicture}[auto,node distance=2.5cm,
                    thick,main node/.style={circle,fill=\pcolor,draw,font=\sffamily\small\bfseries,inner sep=3, }, 
                    scale=0.4,minimum size=10pt]

  \node[main node] (A) at (0,0) {\supertiny{A}};
  
  \node[main node] (B) at (3,3) {\supertiny{B}};
  
  \node[main node] (C) at (3,0) {\supertiny{C}};
  
  


\foreach \source/\dest in {A/B, B/C, C/A}
     \draw[<->] (\source) -- (\dest);

\node at (0,2.6) {\scriptsize{$c_1$ only}};
\draw[->] (A) edge [loop above, min distance =22mm, out=70, in=110] (A);

\end{tikzpicture}   
    \caption{$\pgraph$}
    \label{fig:physical-graph}
    \end{subfigure}
    \begin{subfigure}[b]{0.3 \linewidth}
    \centering
    \begin{tikzpicture}[->,node distance=2.5cm,
                    thick,main node/.style={circle,fill=\gcolor,draw,font=\sffamily\small\bfseries,inner sep=3}, 
                    gray node/.style={circle,fill=\ncolor,draw,font=\sffamily\small\bfseries,inner sep=3},
                    scale=0.4,minimum size=10pt]

  
  \node[main node] (A1) at (3,2) {\supertiny{A}};
  \node[gray node] (B1) at (3,0) {\supertiny{B}};
  \node[gray node] (C1) at (3,-2) {\supertiny{C}};
  
  \node[gray node] (A2) at (6,2) {\supertiny{A}};
  \node[main node] (B2) at (6,0) {\supertiny{B}};
  \node[main node] (C2) at (6,-2) {\supertiny{C}};
  
  \node[main node] (A3) at (9,2) {\supertiny{A}};
  \node[main node] (B3) at (9,0) {\supertiny{B}};
  \node[main node] (C3) at (9,-2) {\supertiny{C}};
  


\draw  (A1.east) -- (B2.west);
\draw  (A1.east) -- (C2.west);
\draw  (A1.east) to[out=60, in=120] (A3.west);

\draw  (B2.east) -- (C3.west);
\draw  (B2.east) -- (A3.west);
\draw  (C2.east) -- (B3.west);
\draw  (C2.east) -- (A3.west);



\end{tikzpicture}
    \caption{$\mathcal{G}_{c_1}$}
    \label{fig:unrolled-c1}
    \end{subfigure}
    \begin{subfigure}[b]{0.3 \linewidth}
    \centering
    \begin{tikzpicture}[->,node distance=2.5cm,
                    thick,main node/.style={circle,fill=\gcolor,draw,font=\sffamily\small\bfseries,inner sep=3}, 
                    gray node/.style={circle,fill=\ncolor,draw,font=\sffamily\small\bfseries,inner sep=3},
                    scale=0.4,minimum size=10pt]

  
  \node[gray node] (A1) at (3,2) {\supertiny{A}};
  \node[main node] (B1) at (3,0) {\supertiny{B}};
  \node[gray node] (C1) at (3,-2) {\supertiny{C}};
  
  \node[main node] (A2) at (6,2) {\supertiny{A}};
  \node[gray node] (B2) at (6,0) {\supertiny{B}};
  \node[main node] (C2) at (6,-2) {\supertiny{C}};
  
  \node[main node] (A3) at (9,2) {\supertiny{A}};
  \node[main node] (B3) at (9,0) {\supertiny{B}};
  \node[main node] (C3) at (9,-2) {\supertiny{C}};
  


\draw  (B1.east) -- (C2.west);
\draw  (B1.east) -- (A2.west);

\draw  (A2.east) -- (B3.west);
\draw  (A2.east) -- (C3.west);
\draw  (C2.east) -- (B3.west);
\draw  (C2.east) -- (A3.west);



\end{tikzpicture}
    \caption{$\mathcal{G}_{c_2}$}
    \label{fig:unrolled-c2}
    \end{subfigure}
    \caption{Physical graph $\pgraph$ and layered graphs $\mathcal{G}_{c_1}, \mathcal{G}_{c_2}$ obtained by unrolling $\pgraph$ over 3 steps. 
    Connectors $c_1$ and $c_2$ start at A and B, respectively. $\mathsf{G}$ has a loop at A for $c_1$ only, taking 2 steps to cross. All the other edges can be crossed in a single timestep by either connector. Unreachable nodes are in white.
    }
    \label{fig:unrolled}
\end{figure}
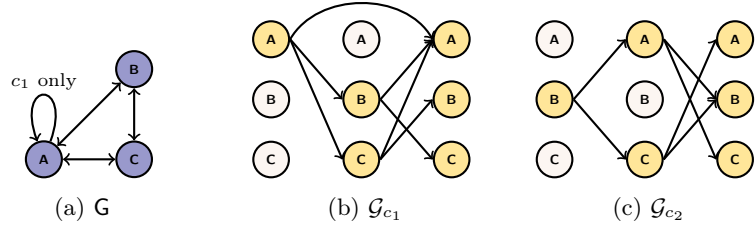

Due to the construction, the layered graph edges in general differ across the connectors, whereas the nodes in the individual layers are the same. Each $v\in\lvertices$ corresponds to some node $\mathsf{v}\in\mathsf{V}$ laying in layer $t$ and we denote this copy of node $\mathsf{v}$ as $v=\mathsf{v}_t$. For an edge $e\in\ledges_c$, we denote the corresponding edge in the physical graph as $\mathsf{E}(e)$. $\mathsf{E}(e)$ is always a singleton. Blue's action space consists of paths in these layered graphs, one per each connector, and can be encoded as solutions to the following feasibility MILP:
\begin{equation}
    \label{blue-flows}
    \tag{F}
    \begin{aligned}
        1 &= \sum_{e\in \ledges_c^-(\initlocation(c)_0)}f_{c}(e)&&\forall c\in \connectors \\
        \sum_{e\in \ledges_c^-(v_t)}f_{c}(e) &= \sum_{e\in \ledges_c^+(v_t)}f_{c}(e)&&\forall c\in \connectors,\;\forall t\in\timesteps,\;\forall v\in\pvertices\\
        f_c(e)&\in\{0,1\}&&\forall c\in\connectors,\;\forall e\in\ledges_c. 
    \end{aligned}
\end{equation}
We call a feasible tuple of connector paths a \textit{logistics plan} and denote it $\blueplan\in\blueplans$, with $\blueplan_c=(e_{c,1},\dots,e_{c,k})$ being a path of a connector $c$. For each logistics plan we have (potentially many) associated feasible package flows, described by the following set of constraints R$(\blueplan)$, starting with the initial supply equation
\numberwithin{equation}{subsection}
\renewcommand{\theequation}{R.\arabic{equation}}
\begin{align}
    S(w,p) &= s_{w_0,p}&&\forall w\in\warehouses,\;\forall p\in\packages,
    \label{eq:R-constrs-first}
\end{align}
then the flow conservation constraints, distinguishing between physical locations that serve or serve not as warehouses
\begin{align}
    l_{c,p}(e_{c,i}) &= l_{c,p}(e_{c,i+1}) && \forall c\in\connectors,\;\forall p\in\packages \notag\\
    & &&\forall e_{c,i}\in\blueplan_c: V^+(e_{c,i})\not\in\warehouses\\
        s_{w_t,p} + \smashoperator{\sum_{c\in\connectors,e\in\blueplan_c: \mathcal{V}^+(e)=w_t}}l_{c,p}(e) &= s_{w_{t+1},p} + \smashoperator{\sum_{c\in\connectors,e\in\blueplan_c: \mathcal{V}^{-}(e)=w_t}}l_{c,p}(e) && \forall t\in\timesteps,\;\forall p\in\packages,\;\forall w\in\warehouses,
\end{align}
and the weight and volume limits of each connector
\begin{align}
    W_{max}(c)&\geq \sum_{p\in\packages} W(p)l_{c,p}(e) &&\forall c\in\connectors,\;\forall e\in \blueplan_c \\
    V_{max}(c)&\geq \sum_{p\in\packages} V(p)l_{c,p}(e) &&\forall c\in\connectors,\;\forall e\in \blueplan_c  \\
    l_{c,p}(e)&\geq 0 && \forall c\in \connectors,\;\forall p\in \packages,\;\forall e\in \blueplan_c\\
    s_{w_t,p}&\geq 0 && \forall t\in \timesteps\cup\{T+1\},\;\forall p\in \packages,\;\forall w\in \warehouses,
    \label{eq:R-constrs-last}
\end{align}
where the $l$ variables encode the package flows, while the $s$ variables record the amount of packages stored in warehouses.

\subsection{Red's Strategy Space}

The strategy space of Red is significantly simpler than Blue's. Our model is similar to the classic network interdiction problems, where Red chooses a subset of edges in the physical graph $\pgraph$ to interdict, given a budget $B \geq 0 $ and a cost function $C:\pedges\to\realp$. Since Red operates on the physical graph instead of any layered graph, we assume an edge is interdicted over the entire game. Red's action space is formed by all feasible solutions of the following MILP:
\begin{equation}
    \label{red-interdiction}
    \tag{Y}
        \bigg\{ y \in \{ 0, 1 \}^{|\pedges|} \: \big| \: B \geq \sum_{e\in \pedges}C(e)y(e) \bigg\}.
\end{equation}
We call a feasible set of interdicted edges an \textit{interdiction plan} and denote it $\redplan\in\redplans$.

\subsection{Utilities}

We assume that Blue aims to maximize the (cumulative) Leontief value at each location with demand $D$ at the final timestep $T$, given a logistics plan $\blueplan$ and feasible package flows $l$ and $s$. The value is defined as
\begin{equation}\label{leontief}
    v(\blueplan,s,l) = \sum_{w\in\warehouses}P(w)\max\left\{\min_{p\in\packages, \demand(w,p)>0}\frac{s_{w_{T},p}}{\demand(w,p)},U(w)\right\},\tag{L}
\end{equation}
where $P(w),U(w)$ are warehouse-specific payoffs per each unit and maximum numbers of units, respectively, of satisfied demand. 

Motivated by the randomized network interdiction problems~\cite{bertsimas2016power}, we make the following two assumptions about the effects of Red's interdiction plan $\redplan$ on Blue's logistics $\blueplan$. (i) Whenever a connector attempts to cross an interdicted physical edge, it is destroyed together with its entire package load. Formally, the interdicted logistics plan is hence a \enquote{truncated} plan
\begin{equation*}
\blueplan_c(\redplan) = 
    \begin{cases}
        \blueplan_c & \text{if~}\;\forall e\in \blueplan_c: \mathsf{E}(e)\not\in\redplan\\
        (e_{c,1},\dots, e_{c,j}) & \text{if~}\;\forall e\in (e_{c,1},\dots e_{c,j-1}): \mathsf{E}(e)\not\in\redplan\text{~and~}\;\mathsf{E}(e_{c,j})\in\redplan.
    \end{cases}
\end{equation*}
(ii) While the logistics plan (i.e., the connector paths) is fixed, the package flows are \textit{adaptive}, optimizing the Leontief value for the Blue's truncated plan. For a pair $(\blueplan,\redplan)$, the utility $u(\blueplan,\redplan)$ can hence be described as the following LP\footnote{Note the formulation is indeed an LP because the inner minimization in formulation~\eqref{leontief} is easily linearized using an auxiliary variable for each warehouse.}:
\begin{equation}
    \label{lp:recourse}
    \tag{U}
    u(\blueplan,\redplan) = \max_{s,l} v(\blueplan(\redplan),s,l) \quad
    \text{such that~} \text{R}(\blueplan(\redplan)) \text{~are satisfied},
\end{equation}
where $\text{R}(\cdot)$ refers to the set of flow constraints~\eqref{eq:R-constrs-first}-\eqref{eq:R-constrs-last}. Note that players cannot alter their strategies once they begin moving, making it, indeed, a one-shot game.  
We further assume the game is \textit{zero-sum}, i.e., Red minimizes $u$.
\section{Computing Solutions of Contested Logistics}
Our goal is to find a Nash equilibrium (NE), possibly mixed, over Blue's logistics plans and Red's interdiction plans. Denote by $\Delta_b$ and $\Delta_r$ the probability simplices over $\blueplans$ and $\redplans$ respectively. Then, for some distribution over a player's plans $x_i \in \Delta_i$, $x_i(p_i)$ is the probability that $p_i$ is played by player $i\in\{r,b\}$. 
The NE problem reduces to solving the bilinear saddle point problem
\begin{align*}
    &\min_{x_b \in \Delta_b} \max_{x_r \in \Delta_r} \mathbb{E}_{\lambda \sim x_b, \iota \sim x_r} \left[ u (\lambda, \iota) \right] \label{eq:general-min-max}
    = &\min_{x_b \in \Delta_b} \max_{x_r \in \Delta_r} \sum_{\lambda \in \blueplans} \sum_{\iota\in\redplans} x_b(\lambda) \cdot x_r(\iota) \cdot u(\lambda, \iota). \nonumber
\end{align*}
Since $\Delta_b$ and $\Delta_r$ are both convex and compact sets, and the objective function is convex-concave, the minimax theorem \cite{v1928theorie} guarantees the existence of a unique value for the game. Nevertheless, determining the NE in CL games is computationally infeasible, as indicated already by the growth of the number of possible logistics plans that is double exponential in the problem's parameters.

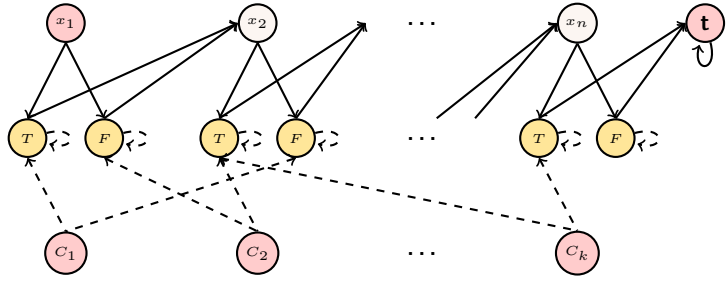
\begin{figure}[t]
\centering
\begin{tikzpicture}[->,node distance=2.5cm,
                    thick,main node/.style={circle,fill=\gcolor,draw,font=\sffamily\small\bfseries,inner sep=2,minimum size=.5cm}, 
                    gray node/.style={circle,fill=\ncolor,draw,font=\sffamily\small\bfseries,inner sep=2,minimum size=.5cm},
                    st node/.style={circle,fill=red!20,draw,font=\sffamily\small\bfseries,inner sep=2,minimum size=.5cm},
                    scale=0.17,minimum size=10pt]

\pgfdeclarelayer{background}
\pgfdeclarelayer{foreground}
\pgfsetlayers{background,main,foreground}

\def\lidepth{9}
\def\liidepth{0}
\def\liiidepth{-9}
  
  
  \node[st node] (X1) at (5, \lidepth) {\supertiny{$x_1$}};
  \node[gray node] (X2) at (20, \lidepth) {\supertiny{$x_2$}};
  \node[gray node,draw=none,fill=none] (X) at (30, \lidepth) {};
  \node[main node,fill=none,draw=none] (XX) at (33,\lidepth) {\dots};
  \node[gray node] (Xn) at (45, \lidepth) {\supertiny{$x_n$}};
  \node[st node] (t) at (55, \lidepth) {t};
  
  \node[main node] (X1t) at (2,\liidepth) {\supertiny{$T$}};
  \node[main node] (X1f) at (8,\liidepth) {\supertiny{$F$}};
  \node[main node] (X2t) at (17,\liidepth) {\supertiny{$T$}};
  \node[main node] (X2f) at (23,\liidepth) {\supertiny{$F$}};
  \node[main node,draw=none,fill=none] (Xt) at (34,\liidepth) {};
  \node[main node,draw=none,fill=none] (Xf) at (37,\liidepth) {};
  \node[main node,fill=none,draw=none] (Xd) at (33,\liidepth) {\dots};
  \node[main node] (Xnt) at (42,\liidepth) {\supertiny{$T$}};
  \node[main node] (Xnf) at (48,\liidepth) {\supertiny{$F$}};
  
  \node[st node] (C1) at (5, \liiidepth) {\supertiny{$C_1$}};
  \node[st node] (C2) at (20, \liiidepth) {\supertiny{$C_2$}};
  \node[main node,fill=none,draw=none] (Cd) at (33,\liiidepth) {\dots};
  \node[st node] (Cn) at (45, \liiidepth) {\supertiny{$C_{k}$}};

  

\draw (X1.south) -- (X1t.north);
\draw (X1.south) -- (X1f.north);
\draw (X2.south) -- (X2t.north);
\draw (X2.south) -- (X2f.north);
\draw (Xn.south) -- (Xnt.north);
\draw (Xn.south) -- (Xnf.north);



\draw (X1t.north) -- (X2.west);
\draw (X1f.north) -- (X2.west);
\draw (Xnt.north) -- (t.west);
\draw (Xnf.north) -- (t.west);

\draw (X2t.north) -- (X.west);
\draw (X2f.north) -- (X.west);
\draw (Xt.north) -- (Xn.west);
\draw (Xf.north) -- (Xn.west);

\draw[<-, dashed] (X1t.south) -- (C1.north);
\draw[<-, dashed] (X2f.south) -- (C1.north);
\draw[<-, dashed] (X1f.south) -- (C2.north);
\draw[<-, dashed] (X2t.south) -- (C2.north);
\draw[<-, dashed] (Xnt.south) -- (Cn.north);
\draw[<-, dashed] (X2t.south) -- (Cn.north);






\path (t) edge[loop below] ();

\path[dashed] (X1t) edge[loop right] ();
\path[dashed] (X1f) edge[loop right] ();
\path[dashed] (X2t) edge[loop right] ();
\path[dashed] (X2f) edge[loop right] ();
\path[dashed] (Xnt) edge[loop right] ();
\path[dashed] (Xnf) edge[loop right] ();



\end{tikzpicture}
\caption{The physical graph described in Proposition~\ref{prop:eqm}, serving as a game for the 3-SAT problem we aim to reduce from. Each node in the top layer, denoted by $x_i$, corresponds to a variable $x_i$ in the SAT formula, which contains a total of $n$ variables. The nodes in the layer below signify a positive or negative assignment. Each assignment node is connected to the clauses it satisfies. For example, in the depicted graph $C_1=x_1\lor\neg x_2$, $C_2=\neg x_1\lor x_2$, and $C_n = x_2\lor x_n$. Edges available to the assignment connector starting from $x_1$ are solid, edges of the clause connectors starting from $C_j$ are dashed. Every path of the assignment connector of length $2n$ ending in the terminal node $t$ encodes a full assignment.}
\label{fig:sat}
\end{figure}

\begin{proposition}\label{prop:eqm}
It is \NP-hard in terms of $|\pgraph|$, $|\connectors|$, $|\packages|$, and $T$ to find a NE for a contested logistics game with Leontief utilities given in Formulation~\ref{lp:recourse}.
\end{proposition}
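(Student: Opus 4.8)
The plan is to reduce from \textsc{3-SAT}, using the physical graph of Figure~\ref{fig:sat} as the gadget, and to make Red powerless so that the equilibrium value coincides with Blue's optimal logistics value. Given a formula $\phi$ with variables $x_1,\dots,x_n$ and clauses $C_1,\dots,C_k$, I would set Red's budget to $B=0$ with strictly positive edge costs, so that the only feasible interdiction plan in~\eqref{red-interdiction} is $\redplan=\emptyset$. Then $\redplans$ is a singleton, the saddle-point problem degenerates, and the game value equals $\max_{\blueplan\in\blueplans} u(\blueplan,\emptyset)$, i.e., Blue's best attainable Leontief value~\eqref{leontief} when nothing is interdicted. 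Any algorithm that returns a Nash equilibrium also reveals this value (evaluate the equilibrium expected payoff, or note that against a single Red action Blue's maximin strategy can be taken pure and optimal), so it suffices to show that deciding whether Blue can reach a prescribed target value is \NP-hard.

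Next I would specify the transportation instance on the graph of Figure~\ref{fig:sat}. A single \emph{assignment connector} starts at $x_1$; by the path/flow feasibility constraints~\eqref{blue-flows}, its unique unit of flow must traverse, for every variable, exactly one of the two literal nodes $x_i=T$ or $x_i=F$ before proceeding to $x_{i+1}$, and finally reach the terminal $t$ (padding the remaining horizon with the self-loop at $t$). Since the gadget offers no way to revisit a variable, this path encodes a well-defined truth assignment, and $T=O(n)$ suffices. For each clause $C_j$ there is a \emph{clause connector} starting at the warehouse $C_j$, holding one unit of a single package type $p^\ast$ as initial supply; its only edges lead up to the literal nodes that satisfy $C_j$, where it may deposit $p^\ast$ into storage and wait (via the dashed self-loops). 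I would designate every literal node, every clause node, $x_1$, and $t$ as warehouses (leaving the intermediate variable nodes as non-warehouses, so that loads carried by the assignment connector are preserved between pickups), place the sole demand $D=k$ for $p^\ast$ at $t$, and choose unit weights/volumes and connector capacities so that no capacity constraint binds.

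The crux of the construction is the \emph{coupling}: a clause connector can only move $p^\ast$ as far as a satisfying literal node, and the only edges leaving a literal node toward $t$ belong to the assignment connector. Hence a clause's package reaches $t$ if and only if it was stored at a literal node that the assignment connector actually visits---precisely when the corresponding literal is set true. Consequently Blue can deliver all $k$ units to $t$, and thus maximize the Leontief value, if and only if there is a single assignment making every clause true, i.e., iff $\phi$ is satisfiable. Because the assignment and clause connectors are chosen by the \emph{same} player concurrently, Blue is free to pick the best assignment, so the optimum equals the maximum number of simultaneously satisfiable clauses, and the target value $k$ is reached exactly when $\phi$ is satisfiable.

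The main obstacle I anticipate is verifying the two directions of this equivalence rigorously while ruling out ``cheating'' plans. For soundness I must check the timing in the unrolled layered graphs: each clause connector can reach and deposit at its literal node and then wait until the assignment connector passes through at its scheduled timestep to collect the stored $p^\ast$ and carry it onward to $t$; the warehouse storage variables $s$ in~\eqref{lp:recourse} persist across timesteps, making this feasible whenever the literal is visited. For completeness I must argue that no feasible package flow can route $p^\ast$ to $t$ through a literal node the assignment connector skips, which follows from the flow-conservation constraints of $\text{R}(\cdot)$ together with the fact that no connector other than the assignment connector has an outgoing edge from a literal node toward $t$. Finally I would tally the parameters---$|\pgraph|=O(n+k)$, $|\connectors|=k+1$, $|\packages|=O(1)$, and $T=O(n)$---to confirm that the reduction is polynomial in all four simultaneously, yielding the claimed joint \NP-hardness.
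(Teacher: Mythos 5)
Your proposal is correct and follows essentially the same reduction as the paper: the same 3-SAT gadget of Figure~\ref{fig:sat}, with clause connectors depositing packages at satisfying literal nodes for a single assignment connector to collect and deliver to $t$, so that the game value attains its maximum iff the formula is satisfiable. The only cosmetic difference is how Red is neutralized---you set $B=0$, while the paper gives Red budget $1$ with all edges costing $2$ except the harmless self-loop at $t$---which changes nothing substantive.
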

\begin{proof}
    We employ a reduction from the 3-SAT problem. Assume we are given a CNF having $n$ variables and $k$ clauses, where each clause has at most 3 literals. We aim to determine the satisfiability of the formula. We construct a CL scenario featuring a single type of package with unit weight and volume, and $k+1$ connectors: a single assignment connector constrained to weight and volume limits of $k$, and $k$ clause connectors with limits of $1$. The physical graph is depicted in Figure~\ref{fig:sat}. The top layer consists of a single node per each variable $x_i$, and a terminal node $t$. The layer below has two nodes per each $x_i$, signifying a positive (T) or a negative (F) assignment. The bottom layer contains one node per each clause. Each of the $x_i$ nodes is connected to its assignment nodes. Each clause node has an edge to a corresponding assignment node of every variable included in the clause. Moreover, the assignment nodes of the variable $x_i, i<n$ are connected to the variable node $x_{i+1}$. The assignment nodes of $x_n$ are connected to $t$. The first connector starts at node $x_1$. Other connectors start at their corresponding clause. Moving across each edge takes one time step. The outgoing edges from the assignment nodes are not available to the connectors starting at the clause nodes. The scenario's time horizon is $2n$. There are $2n+k+2$ warehouses: one at $x_1$, one at $t$, and one in every clause and assignment nodes. Only the clause warehouses supply a single unit of the package each. There is only one demand node, $t$, with demand $k$ and both unit payoff and maximum units set to 1. Red has budget $1$, with each edge having a cost of 2, except the loop in $t$ with cost $1$. Red's action space is hence trivial. 
    In the equilibrium, the assignment connector collects as many packages from the satisfied clauses as possible. We will show the value of the equilibrium is $1$ if and only if the formula is satisfiable.
    \begin{itemize}
        \item[$\rightarrow$] 
        Suppose there exists a satisfying assignment. Let the satisfying assignment define the path of the assignment connector. For each clause, there exists a literal that the assignment makes true. Let these literals define the paths of the clause connectors. From the definition of the satisfying assignment, the assignment connector's path crosses all the paths of the clause connectors, which enables it to pick up all $k$ packages and bring them to $t$. The Leontief utility at $t$ is hence $k/k=1$.
        \item[$\leftarrow$] If the value of the equilibrium is $1$, then the assignment connector must have picked up all $k$ packages, meeting with all $k$ clause connectors. Due to the construction of the physical graph, the assignment connector can visit only one of the assignment nodes for each variable, effectively encoding a variable assignment. Because the clause nodes are connected with only those assignment nodes that satisfy the clause, meeting with an assignment connector corresponds to satisfying the clause. The path of the assignment connector meeting all $k$ clause connector hence encodes a satisfying assignment.
    \end{itemize}
\end{proof}

\subsection{Best Response Complexity and Computation}

Recall that Blue's and Red's (pure) best responses to fixed strategies $x_r$ and $x_b$ are defined as $\blueplan^{\text{BR}} = \argmax_{\blueplan \in \blueplans} u(\blueplan, x_r)$ and $\redplan^{\text{BR}} = \argmin_{\redplan\in\redplans} u(x_b, \redplan)$, where $u(x_b, \redplan)=\mathbb{E}_{\blueplan \sim x_b} u(\blueplan, \redplan)$ and $u(\blueplan, x_r)=\mathbb{E}_{\redplan \sim x_r} u(\blueplan, \redplan)$. We note that while best responses are closely related to NE computation, they are generally distinct problems. There are classes of games where computing best responses is difficult but finding a NE is easy, and vice versa \cite{xu2016mysteries}. Unfortunately, computing best responses in CL games is also intractable. 
Indeed, intractability of Blue's best response follows directly from the proof construction of Proposition~\ref{prop:eqm}.

\begin{corollary}\label{prop:sat}
    Let $\tilde{\redplans} \subseteq \redplans$ be of size $k$ (possibly smaller than $|\redplans|$) and $\widetilde{x}_r$ be a distribution with support $\tilde{\redplans}$.
    Finding Blue's best response against $\widetilde{x}_r$ in a CL problem with Leontief utilities is \NP-hard in terms of $|\pgraph|$, $|\connectors|$, $|\packages|$, $T$, and $k$. 
\end{corollary}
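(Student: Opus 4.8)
The plan is to observe that the reduction already constructed in the proof of Proposition~\ref{prop:eqm} is, in disguise, a Blue best-response computation, so the corollary demands almost no additional work. The decisive feature of that construction is that Red is effectively powerless: with budget $1$ and every physical edge costing $2$ except the self-loop at $t$ costing $1$, the only interdiction Red can afford beyond doing nothing is to cut the loop at $t$. Since the assignment connector arrives at $t$ exactly at the final timestep $T=2n$ and never needs to traverse that loop, interdicting it leaves the attainable Leontief value unchanged. Consequently the value of the game in that instance is governed entirely by Blue's choice of logistics plan, which is precisely why the NP-hardness of Nash computation there already encodes the hardness of Blue's best response.

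Concretely, I would reuse the identical physical graph, connectors, packages, supplies, demands, costs, and time horizon as in Proposition~\ref{prop:eqm}, and let $\widetilde{x}_r$ be the point mass on the empty interdiction plan $\redplan^{\varnothing}$ (which interdicts no edge), so that its support $\tilde{\redplans}=\{\redplan^{\varnothing}\}$ has size $k=1$. For this fixed Red strategy, $\blueplan(\redplan^{\varnothing})=\blueplan$ for every $\blueplan\in\blueplans$, hence $u(\blueplan,\widetilde{x}_r)=u(\blueplan,\redplan^{\varnothing})$ and
\[
\blueplan^{\mathrm{BR}}=\argmax_{\blueplan\in\blueplans}u(\blueplan,\widetilde{x}_r)=\argmax_{\blueplan\in\blueplans}\;\max_{s,l}\,v(\blueplan,s,l),
\]
so that computing Blue's best response is exactly computing the optimal logistics value in the uncontested instance. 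By the two implications established in Proposition~\ref{prop:eqm}, this optimal value equals $1$ if and only if the underlying $3$-SAT formula is satisfiable; thus any oracle for Blue's best response decides satisfiability.

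Since the instance coincides with the one in Proposition~\ref{prop:eqm}, the quantities $|\pgraph|$, $|\connectors|$, $|\packages|$, and $T$ are all polynomial in the number of variables and clauses of the formula, while the Red-support size is $k=1$. Hardness for this smallest case immediately yields hardness for the general problem in which $k$ is an additional size parameter (and indeed $k=1$ is smaller than $|\redplans|$, matching the statement), so finding Blue's best response is \NP-hard as claimed.

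I do not expect a genuine obstacle: the one point requiring care is the claim that pinning $\widetilde{x}_r$ to the empty plan does not alter the best-response optimization relative to the equilibrium computation of Proposition~\ref{prop:eqm}. This reduces to verifying that Red's single affordable nontrivial action---cutting the self-loop at $t$---cannot lower Blue's optimal value, which follows from the exact timing $T=2n$ of the construction, since no optimal assignment-connector path ever uses that loop.
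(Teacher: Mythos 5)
Your proposal is correct and matches the paper's intent: the paper proves the corollary simply by noting that intractability of Blue's best response ``follows directly from the proof construction of Proposition~\ref{prop:eqm},'' precisely because Red's action space there is trivial. Your instantiation (point mass on the empty interdiction plan, $k=1$, plus the check that interdicting the loop at $t$ is irrelevant) just makes that same argument explicit.
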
  

Computing Blue's best response can be done via a polynomially sized MILP. Assume Red plays the interdiction plans $\redplan^1,\dots,\redplan^k$ with probabilities $x_r^1,\dots,x_r^k$. For each $\redplan^i$, let us denote the set of Blue's edges in their layered graph that are interdicted by $\redplan^i$ as $\redplan^i_c=\{e\in\ledges_c:\;\mathsf{E}(e)\in\redplan^i\}$. The the best response is formulated as the following max-max \textsc{BlueBR} formulation:

\vspace{-.3cm}
\begin{align*}
    \max_{f\in\ref{blue-flows}}\max_{l,s,g}\;&\sum_{i\in [k]}\sum_{w\in\warehouses} P(w)x_r^i g^i_w\\
        S(w,p) &= s^i_{w_0,p}&&\forall i\in[k],\;\forall w\in\warehouses,\;\forall p\in\packages\\
    \smashoperator{\sum_{e\in\ledges^+_c(v_t)\backslash\redplan^i_c}}l^i_{c,p}(e) &= \smashoperator{\sum_{e\in\ledges^-_c(v_t)}}l^i_{c,p}(e) && \forall i\in[k],\;\forall c\in\connectors,\;\forall p\in\packages,\\
    & &&\forall t\in\timesteps\backslash\{0,T\},\;\forall v\not\in\warehouses\\
    l^i_{c,p}(e) &\leq M f_{c}(e) && \forall i\in[k],\;\forall c\in\connectors,\;\forall p\in\packages,\;\forall e\in\ledges_c\\
    W_{max}(c)&\geq \sum_{p\in\packages} W(p)l^i_{c,p}(e) &&\forall i\in[k],\;\forall c\in \connectors,\;\forall e\in \ledges_c \\
     V_{max}(c)&\geq \sum_{p\in\packages} V(p)l^i_{c,p}(e) &&\forall i\in[k],\;\forall c\in \connectors,\;\forall e\in \ledges_c\\
     s^i_{w_t,p} + \smashoperator{\sum_{c\in\connectors,e\in\ledges^+_c(w_t)\backslash\redplan^i_c}}l^i_{c,p}(e) &= s^i_{w_{t+1},p} + \smashoperator{\sum_{c\in\connectors,e\in\ledges^-_c(w_t)}}l^i_{c,p}(e)  && \forall i\in[k],\;\forall t\in\timesteps,\;\forall p\in\packages,\;\forall w\in\warehouses\\  \smashoperator{\sum_{\substack{p\in\packages,t\in\timesteps,w\in\warehouses \\ e'\in\ledges^-_c(w_t): e\subset e'}}}l^i_{c,p}(e') &\leq M\cdot (1-f_{c}(e))&&\forall i\in[k],\;\forall c\in \connectors,\;\forall e\in\redplan_c^i\\
    0\leq g^i_w &\leq s^i_{w_{T+1},p} / D(w,p) && \forall i\in[k],\;\forall w\in\warehouses,\; \forall p\in\packages: D(w,p) > 0 \\
         l^i_{c,p}(e)&\geq 0 && \forall i\in[k],\;\forall c\in \connectors,\;\forall p\in \packages,\;\forall e\in\ledges_c\\
        s^i_{w_t,p}&\geq 0 && \forall i\in[k],\;\forall t\in \timesteps\cup\{T+1\},\\
        & && \forall p\in \packages,\;\forall w\in \warehouses\\
    g^i_w &\leq U(w) && \forall i\in[k],\;\forall w\in\warehouses.
\end{align*}

Note that in this MILP, for each $i\in [k]$, we have a different load flow $l^i_{c,p}(e)$. To simulate the truncated logistics plans, any potentially positive load on an interdicted edge is omitted as an \textit{incoming} load from the conservation constraints in the following node. Moreover, we need to make sure that if a connector of a particular logistics plan gets destroyed, its load stays zero for all future time steps, especially if its path is scheduled to cross a warehouse. This is achieved by the second, \textit{load-cancelling} big-M constraint. Here, by $e\subset e'$ we denote for edges $e\neq e'\in\ledges_c$ that $e'$ is reachable in the layered graph from $e$.

\begin{figure}[t]
\centering
\begin{tikzpicture}[->,node distance=2.5cm,
                    thick,main node/.style={circle,fill=\gcolor,draw,font=\sffamily\small\bfseries,inner sep=2,minimum size=.5cm}, 
                    gray node/.style={circle,fill=\ncolor,draw,font=\sffamily\small\bfseries,inner sep=2,minimum size=.5cm},
                    st node/.style={circle,fill=red!20,draw,font=\sffamily\small\bfseries,inner sep=2,minimum size=.5cm},
                    scale=0.13,minimum size=10pt]

\pgfdeclarelayer{background}
\pgfdeclarelayer{foreground}
\pgfsetlayers{background,main,foreground}

\def\liidepth{15}
\def\liiidepth{35}
\def\livdepth{50}
  
  \node[st node] (s) at (0,0) {{s}};
  
  
  \node[main node] (A2) at (\liidepth,10) {\supertiny{$S_1$}};
  \node[main node] (B2) at (\liidepth,3) {\supertiny{$S_2$}};
  \node[main node,fill=none,draw=none] (Xd) at (\liidepth,-2) {$\vdots$};
  \node[main node] (E2) at (\liidepth,-10) {\supertiny{$S_m$}};
  
  \node[main node] (A3) at (\liiidepth,10) {\supertiny{$S'_1$}};
  \node[main node] (B3) at (\liiidepth,3) {\supertiny{$S'_2$}};
    \node[main node,fill=none,draw=none] (Yd) at (\liiidepth,-2) {$\vdots$};
  \node[main node] (E3) at (\liiidepth,-10) {\supertiny{$S'_m$}};

  \node[st node] (t) at (\livdepth,0) {{t}};
  

\draw (s) -- (A2.west);
\draw (s) -- (B2.west);
\draw (s) -- (E2.west);



\draw  (A3.west) -- (B2.east);
\draw  (A3.west) -- (E2.east);
\draw  (B3.west) -- (A2.east);
\draw  (B3.west) -- (E2.east);
\draw  (E3.west) -- (A2.east);
\draw  (E3.west) -- (B2.east);

\draw[<-]  (A2.east)  -- (A3.west);
\draw[<-]  (B2.east)  -- (B3.west);
\draw[<-]  (E2.east)  -- (E3.west);

\def\redcolor{burntorange}
\draw[->]  (A2.east) [color=\redcolor] -- (A3.west);
\draw[->]  (B2.east) [color=\redcolor] -- (B3.west);
\draw[->]  (E2.east) [color=\redcolor] -- (E3.west);

\draw  (A3.east) -- (t.west);
\draw  (B3.east) -- (t.west);
\draw  (E3.east) -- (t.west);

\path (t) edge[loop right] ();



\end{tikzpicture}
\caption{The physical graph described in Proposition~\ref{prop:br_red}, serving as a game for the set cover problem we aim to reduce from. The second and third layers are identical, containing a node for each set $S_i$. Every $u_i$  corresponds to a path going through the edges $S_j,S'_j$ of all the sets $S_j$ the $u_i$ is contained in. Red can interdict only the forward edges between the second and third layer (depicted in red color), encoding a selection of sets in the cover.
}
\label{fig:br_red}
\end{figure}
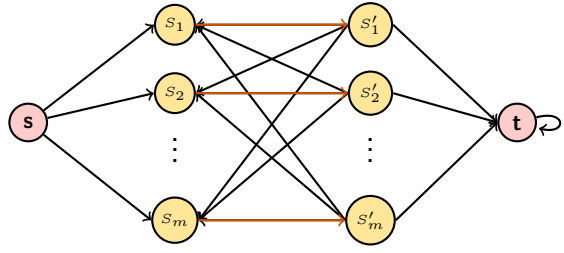

\begin{proposition}\label{prop:br_red}
Let $\tilde{\blueplans} \subseteq \blueplans$ be of size $k$ (possibly smaller than $|\blueplans|$) and $\widetilde{x}_b$ be a distribution with support $\tilde{\blueplans}$. Finding Red's best response against $\widetilde{x}_b$ in a contested logistics problem with Leontief utilities is \NP-hard in terms of $|\pgraph|$, $b$, and $k$.
\end{proposition}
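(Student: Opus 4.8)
The plan is to reduce from the decision version of \textsc{Set Cover}. Given a universe $U=\{u_1,\dots,u_n\}$, a family of sets $S_1,\dots,S_m\subseteq U$, and a target size $B$, we want to decide whether some $B$ of the sets cover $U$. I would build the CL instance shown in Figure~\ref{fig:br_red}: a source $s$, a sink $t$ carrying a self-loop, and two identical layers of nodes $S_1,\dots,S_m$ and $S'_1,\dots,S'_m$, one pair per set. The only cheap edges (cost $1$) are the forward edges $S_j\to S'_j$; every other edge (the edges out of $s$, the cross edges $S'_j\to S_{j'}$, the edges $S'_j\to t$, and the loop at $t$) is given cost $B+1$, and Red's budget is set to $b=B$, so Red can only ever afford forward edges. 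There is a single unit-weight, unit-volume package, one warehouse at $s$ supplying a single unit and one warehouse at $t$ demanding a single unit with unit payoff and maximum units $1$, and the horizon is $T=2m+1$, long enough to accommodate the longest routing below.

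Next I would fix Blue's mixed strategy $\widetilde{x}_b$ to be uniform over $k=n$ pure plans, one per element $u_i$. The plan for $u_i$ uses a single connector that starts at $s$, threads through the pairs $S_j\to S'_j$ for exactly those sets $S_j$ containing $u_i$ (linked by the cross edges $S'_j\to S_{j'}$), then moves to $t$ and idles on the self-loop until time $T$. By construction this plan's path crosses the physical edge $S_j\to S'_j$ if and only if $u_i\in S_j$. The connector carries the single supplied unit from $s$; if it reaches $t$ it satisfies the unit of demand and the Leontief value is $1$, otherwise it is $0$.

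The key claim is then a direct translation between interdiction plans and covers. Since every non-forward edge costs $B+1>b$, a budget-feasible interdiction plan $\iota$ is exactly a subfamily $\mathcal{C}\subseteq\{1,\dots,m\}$ with $|\mathcal{C}|\le B$. Interdicting $S_j\to S'_j$ truncates the connector of plan $u_i$ before it can reach $t$ precisely when $u_i\in S_j$, so under the optimal adaptive loading of the truncated plan we get $u(\lambda_i,\iota)=0$ if $u_i\in\bigcup_{j\in\mathcal{C}}S_j$ and $u(\lambda_i,\iota)=1$ otherwise. Hence $u(\widetilde{x}_b,\iota)=\tfrac1n\,|\{i:\ u_i\notin\bigcup_{j\in\mathcal{C}}S_j\}|$, and Red can drive the expected utility to $0$ within budget $B$ if and only if $\mathcal{C}$ is a cover of $U$ of size at most $B$ (if no such cover exists, every budget-$B$ interdiction leaves some element uncovered, so the value is at least $\tfrac1n>0$). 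Computing Red's best-response value and testing whether it equals $0$ therefore decides \textsc{Set Cover}; since $|\pgraph|$, $b=B$, and $k=n$ are all polynomial in the instance size, this establishes $\NP$-hardness.

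I expect the main obstacle to be verifying the two modeling subtleties rather than the combinatorial correspondence, which is clean. First, I must confirm that each element's weaving path is realizable as a genuine path in the unrolled layered graph $\lgraph_c$ within the horizon $T$: this requires the cross edges $S'_j\to S_{j'}$ to be present and the self-loop at $t$ to absorb the slack for elements belonging to fewer than $m$ sets, so that every uninterdicted plan delivers by time $T$. Second, I must check that the truncation and adaptive-loading semantics of Formulation~\eqref{lp:recourse} indeed collapse an interdicted plan's contribution to exactly $0$: since the sole demand sits at $t$ and an interdicted connector is destroyed together with its load strictly before $t$, no feasible recourse loading can place the unit in $t$'s warehouse at time $T$, forcing the value to $0$. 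Once these are pinned down, the equivalence with \textsc{Set Cover} and the hardness conclusion follow immediately.
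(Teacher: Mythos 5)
Your proposal is correct and follows essentially the same reduction as the paper: the same four-layer set-cover gadget with source $s$, duplicated set-layers, and looped sink $t$, the same element-indexed uniform mixed strategy for Blue threading through the sets containing each element, and the same ``best-response value is $0$ iff a cover of size at most $b$ exists'' equivalence. The only differences are minor refinements---you enforce that Red can afford only the forward edges $S_j \to S'_j$ by pricing all other edges at $B+1$ (the paper simply declares the other edges uninterdictable), and your horizon $2m+1$ is in fact the more careful choice, since an element can lie in up to $m$ sets.
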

\begin{proof}
    We reduce the set cover problem with the universe $\mathcal{U} = \{u_1, u_2, \dots, u_n\}$, a collection of sets $S = \{S_1, S_2, \dots, S_m\}$, and an integer $b$ to a CL scenario. This scenario features a single package type with unit weight and volume, and a single connector with weight and volume limits of 1. The connector moves across a graph with four layers: the first and last contain only nodes $s$ and $t$, respectively. The second and third layers are identical, each with one node per set $S_j$. Node $s$ connects to each $S_i$ in the second layer. The second and third layers are connected by edges between corresponding nodes $S_j$ and $S'_j$. Each $S'_j$ in the third layer connects to every $S_i$ in the second layer and to $t$. Node $t$ has a loop to itself. The connector starts at $s$, each edge takes one time step to cross, and the time horizon is $2n + 1$. There are warehouses at nodes $s$ and $t$, with a supply of 1 at $s$ and a demand of 1 at $t$, each with a payoff and maximum of 1. Red has a budget of $b$ and can interdict only the edges between the second and third layers, each costing 1. Blue's mixed strategy is constructed as follows: for each $u_i$, create $T_i = \{S_j \mid u_i \in S_j\}$, take an arbitrary enumeration $(t_1,\dots,t_k)$ of $T_i$, and define a path 
    $P_i = (s, t_1, t'_1,\dots,t_k,t'_k, t,\dots,t)$ with $2n-2|T_i|+1$ loops in $t$ at the end, 
    each played with equal probability. Every $u_i$ hence corresponds to a path going through all the sets $u_i$ is contained in, terminated by loops in $t$. Note that the order in which the sets in $T_i$ are traversed in $P_i$ does not matter. The BR value is 0 if and only if there is a set cover of size at most $b$.
    \begin{itemize}
        \item[$\rightarrow$] If the value is 0, then there exists a best response strategy that interdicts all $n$ paths. From the construction, the size of this interdiction plan is at most $b$, and the selected edges encode a selection of at most $b$ sets from $S$. Because every path corresponds to one $u_i$, the selected sets form a set cover.
        \item[$\leftarrow$] is analogous. 
    \end{itemize} 
\end{proof}

In practice, we optimize Red's utility using the feasibility formulation~\eqref{red-interdiction}. Assume Blue plays the logistic plans $\blueplan^1,\dots, \blueplan^k$ with probabilities $x_b^1,\dots,x_b^k$. The optimal Red's interdicting plan $\redplan$, encoded via binary indicators $y$, can be formulated as the following simple min-max MILP, with the flow constraints~(R) for each logistics plan in the support:

\begin{equation*}
\label{red-br}
\begin{aligned}
    \min_{y\in\ref{red-interdiction}}\max_{s,l,g}&\sum_{i\in [k]}\sum_{w\in\warehouses} P(w)x_b^i g^i_w - &&\sum_{i\in [k]}\sum_{c\in\connectors}\sum_{e^i_{c,j}\in \blueplan^i_c}\sum_{\substack{e^i_{c,k}\in \blueplan^i_c\\ k\geq j}}\sum_{p\in\packages}Zy(\mathsf{E}(e^i_{c,j}))l^i_{c,p}(e^i_{c,k})\\
    & \text{R}(\blueplan^i)&&\forall i\in[k]\\
    0\leq g^i_w &\leq s^i_{w_{T+1},p} / D(w,p) &&\forall i\in[k],\;\forall w\in\warehouses,\; \forall p\in\packages: D(w,p) > 0 \\
    g^i_w &\leq U(w) &&\forall i\in[k],\;\forall w\in\warehouses.
\end{aligned}
\end{equation*}
Note the penalty term in the objective that plays a similar role to the load cancelling constraint in Blue's BR. Using penalty terms is less numerically stable than the big-M constraints. However, using the same approach as in the \textsc{BlueBR} would result in bilinear terms in the constraints, that are more cumbersome to linearize, and involve unbounded big-M constants. The constant $Z$ is chosen to make any potential increase in the Leontief utility that Blue might gain by sending a positive load over an interdicted edge undesirable due to the incurred penalty, e.g., $Z=\max_{w\in\warehouses} P(w)$.
Since the inner problem is an LP, we can dualize it, which removes the bilinear terms in the objective and gives us the final \textsc{RedBR} integer formulation. Due to space constraints, we have deferred the specific details of this dualization to the extended version of the paper.

\subsection{Approximating NE Using Strategy Generation}

Despite the exponential size of Blue's strategy space, practical CL scenarios often exhibit equilibria with relatively small supports. This observation leads us to employ the \textit{double oracle} (DO) framework.

The DO algorithm (Algorithm~\ref{alg:do}) is an iterative, specialized form of concurrent column and row generation. It is frequently used to address large saddle-point problems that have efficient (in practical terms) best-response oracles. The DO algorithm incrementally constructs a subgame -- a subset of pure strategies for each player -- with the intention of excluding strategies that do not contribute to the equilibrium. At the conclusion of the algorithm, the subgame (ideally a small portion of the entire game) contains a NE that mirrors the NE of the original game. In our context, pure strategies consist of logistic and interdiction plans, and subgames are defined by subsets $\widetilde{\blueplans} \subseteq \blueplans$ and $\widetilde{\redplans} \subseteq \redplans$. 

The process begins with a small subgame for each player, $\widetilde{\blueplans}$ and $\widetilde{\redplans}$. In each iteration, it calculates the equilibrium $(\widetilde{x}^*_b, \widetilde{x}^*_r)$ within the current subgame, allowing players to choose distributions of plans only from $\widetilde{\blueplans}$ or $\widetilde{\redplans}$. For each player $i$, we determine the best responses $\blueplan^{\text{BR}}$ and $\redplan^{\text{BR}}$ against their opponent's subgame equilibrium strategy $\widetilde{x}_{-i}^*$, using best-response oracles. These best responses introduce new plans into the subgame, and the process repeats.

The DO algorithm terminates when the best-response oracles produce responses that do not enhance any of the player's utility over the subgame value. This indicates that the current subgame equilibrium is also an equilibrium in the full game, and adding more strategies will not yield less exploitable strategies for either player. In practice, instead of converging to an exact equilibrium, we calculate the equilibrium gap $\nabla = u(\widetilde{x}_b^*, \redplan^\text{BR}) - u(\blueplan^\text{BR}, \widetilde{x}_r^*)$ and terminate when $\nabla \leq \epsilon$ for a predetermined threshold $\epsilon > 0$, returning a $2\epsilon$-approximate-NE.

In practice, the time needed to determine Blue's best response significantly affects the overall runtime. To accelerate the computation, we set a predetermined time limit for solving the MILP, rather than solving it to full completion. This approach yields an approximate best response that is generally close to the optimal solution. Periodically, and before the final termination, we solve Blue's BR MILP to completion to ensure the equilibrium gap is computed accurately.

\begin{algorithm}[t]
\caption{Double Oracle for Contested Logistics Games}\label{alg:do}
\begin{algorithmic}[1]
\STATE $\widetilde{\blueplans}, \widetilde{\redplans} \gets \textsc{InitialSubgame}(\blueplans, \redplans)$
\REPEAT
\STATE $\widetilde{x}_b^*, \widetilde{x}_r^* \gets \textsc{NashEquilibrium}(\widetilde{\blueplans}, \widetilde{\redplans})$
\STATE $\blueplan^\text{BR}, \redplan^\text{BR} \gets \textsc{BlueBR}(\widetilde{x}_r^*), \textsc{RedBR}(\widetilde{x}_b^*)$ 
\STATE $\widetilde{\blueplans}, \widetilde{\redplans} \gets \widetilde{\blueplans}\cup \{\blueplan^\text{BR}\}, \widetilde{\redplans}\cup \{\redplan^\text{BR}\}$
\UNTIL{$\textsc{EquilibriumGap}(\widetilde{x}_b^*, \widetilde{x}_r^*, \blueplan^\text{BR}, \redplan^\text{BR}) \leq \epsilon$}
\end{algorithmic}
\end{algorithm}
\section{Empirical Evaluation}
Now we move to the experiments on contested logistics scenarios. Our goals are (i) to explore qualitatively how optimal strategies behave in real-world scenarios, and (ii) to evaluate the scalability of our proposed double oracle algorithm using synthetically generated maps. 

All experiments were conducted on an Intel Xeon Gold 6226 (2.9Ghz), restricted to 8 threads and equipped with 32GB of RAM. The (MI)LPs were solved with the Gurobi Optimizer version 10.0.3, build v10.0.3rc0 \cite{gurobi}, on a Linux 64-bit platform. The double oracle algorithm was implemented in Python 3.7.9, using a tolerance setting of $\epsilon=10^{-2}$, and 5s time limit for the MILP solver. 

\begin{figure}[t]
    \centering
\begin{subfigure}[t]{0.32\linewidth}
    \centering
    \begin{tikzpicture}[scale=\fscale]
\begin{semilogyaxis}[
    title={Grid world [$5\times 5$]},
    xlabel={Game size [\# of timesteps]},
    ylabel style={align=center},
    ylabel={Runtime [s]},
    legend pos=north west,
    ymajorgrids=true,
    grid style=dashed,
    scaled ticks=false, 
    title style={font=\titfont},
    label style={font=\labfont},
    tick label style={/pgf/number format/fixed, font=\tickfont}  
]

\legend{rb = 1, rb = 2, rb = 3, rb = 4, rb = 5}
\addplot[color=black,mark=o,error bars/.cd,y dir=both, y explicit]
coordinates {
(6, 71.85600000000001) += (0, 0.10901569177247618) -= (0, 0.10901569177247618) 
(7, 19.079999999999995) += (0, 1.1678944166409275) -= (0, 1.1678944166409275) 
(8, 21.5465) += (0, 1.8819568867761585) -= (0, 1.8819568867761585) 
(9, 36.2665) += (0, 4.339211138286725) -= (0, 4.339211138286725) 
(10, 53.1985) += (0, 8.318342138564061) -= (0, 8.318342138564061) 
(11, 354.861) += (0, 126.0637723157358) -= (0, 126.0637723157358) 
(12, 351.6555) += (0, 120.2587303111066) -= (0, 120.2587303111066) 
(13, 2696.8519999999994) += (0, 1654.629104331971) -= (0, 1654.629104331971) 
(14, 3367.478) += (0, 1380.0699621397155) -= (0, 1380.0699621397155) 
};

\addplot[color=black,mark=square,error bars/.cd,y dir=both, y explicit]
coordinates {
(6, 19.25117647058824) += (0, 2.2120425552949894) -= (0, 2.2120425552949894) 
(7, 29.9375) += (0, 3.562995404150839) -= (0, 3.562995404150839) 
(8, 38.5925) += (0, 5.337866115879325) -= (0, 5.337866115879325) 
(9, 92.995) += (0, 11.48093466532554) -= (0, 11.48093466532554) 
(10, 98.25649999999999) += (0, 11.364481586226631) -= (0, 11.364481586226631) 
(11, 3675.907000000001) += (0, 2456.4944659510115) -= (0, 2456.4944659510115) 
(12, 2305.017368421052) += (0, 1237.1044012091845) -= (0, 1237.1044012091845) 
(13, 1916.9762500000006) += (0, 1288.0761478201628) -= (0, 1288.0761478201628) 
(14, 988.8410526315787) += (0, 295.8955989811772) -= (0, 295.8955989811772) 
};

\addplot[color=black,mark=pentagon,error bars/.cd,y dir=both, y explicit]
coordinates {
(6, 13.206999999999999) += (0, 0.4039372671461173) -= (0, 0.4039372671461173) 
(7, 29.482499999999998) += (0, 3.42603890313427) -= (0, 3.42603890313427) 
(8, 40.55049999999999) += (0, 9.310129471511972) -= (0, 9.310129471511972) 
(9, 126.71849999999999) += (0, 26.764804651621755) -= (0, 26.764804651621755) 
(10, 159.1275) += (0, 49.06492308465011) -= (0, 49.06492308465011) 
(11, 489.65352941176474) += (0, 179.20027898879684) -= (0, 179.20027898879684) 
(12, 4021.9194117647053) += (0, 2572.1921123957545) -= (0, 2572.1921123957545) 
(13, 3319.7925) += (0, 1998.173163467592) -= (0, 1998.173163467592) 
};

\end{semilogyaxis}
\end{tikzpicture}
\end{subfigure}
\hfill
\begin{subfigure}[t]{0.32\linewidth}
    \centering
    \begin{tikzpicture}[scale=\fscale]
\begin{semilogyaxis}[
    title={Grid world [$6\times 6$]},
    xlabel={Game size [\# of timesteps]},
    ylabel style={align=center},
    ylabel={Runtime [s]},
    legend pos=north west,
    ymajorgrids=true,
    grid style=dashed,
    scaled ticks=false, 
    title style={font=\titfont},
    label style={font=\labfont},
    tick label style={/pgf/number format/fixed, font=\tickfont}  
]

\legend{rb = 1, rb = 2, rb = 3, rb = 4, rb = 5}
\addplot[color=black,mark=o,error bars/.cd,y dir=both, y explicit]
coordinates {
(6, 34.3875) += (0, 0.19322045631051094) -= (0, 0.19322045631051094) 
(7, 15.228500000000002) += (0, 1.4891064567295662) -= (0, 1.4891064567295662) 
(8, 22.765500000000003) += (0, 1.4531724777190078) -= (0, 1.4531724777190078) 
(9, 26.9275) += (0, 2.7508188135056884) -= (0, 2.7508188135056884) 
(10, 127.625) += (0, 23.528244610970965) -= (0, 23.528244610970965) 
(11, 152.651) += (0, 20.712919168327872) -= (0, 20.712919168327872) 
(12, 502.1385) += (0, 161.6628055671092) -= (0, 161.6628055671092) 
(13, 503.247) += (0, 140.92105462596686) -= (0, 140.92105462596686) 
(14, 1264.614) += (0, 341.234036171349) -= (0, 341.234036171349) 
};

\addplot[color=black,mark=square,error bars/.cd,y dir=both, y explicit]
coordinates {
(6, 12.220499999999998) += (0, 0.22547782596078042) -= (0, 0.22547782596078042) 
(7, 13.057500000000001) += (0, 0.8764476370704161) -= (0, 0.8764476370704161) 
(8, 41.910000000000004) += (0, 8.503000244495746) -= (0, 8.503000244495746) 
(9, 45.570499999999996) += (0, 6.70178568014761) -= (0, 6.70178568014761) 
(10, 168.753) += (0, 26.489526163531522) -= (0, 26.489526163531522) 
(11, 205.85650000000004) += (0, 45.399323628598374) -= (0, 45.399323628598374) 
(12, 1724.5194999999999) += (0, 1162.4487903488084) -= (0, 1162.4487903488084) 
(13, 1103.4445) += (0, 227.73952545729333) -= (0, 227.73952545729333) 
(14, 2281.8709999999996) += (0, 498.0095971210636) -= (0, 498.0095971210636) 
};

\addplot[color=black,mark=pentagon,error bars/.cd,y dir=both, y explicit]
coordinates {
(6, 11.682500000000001) += (0, 0.11075310379397947) -= (0, 0.11075310379397947) 
(7, 11.624500000000001) += (0, 0.1645031994569678) -= (0, 0.1645031994569678) 
(8, 40.688500000000005) += (0, 6.4856976768319345) -= (0, 6.4856976768319345) 
(9, 47.989) += (0, 9.902633485460743) -= (0, 9.902633485460743) 
(10, 188.4875) += (0, 52.777447799699445) -= (0, 52.777447799699445) 
(11, 217.121) += (0, 52.35515964569601) -= (0, 52.35515964569601) 
(12, 6303.524210526316) += (0, 2479.7461993397524) -= (0, 2479.7461993397524) 
(13, 2518.67) += (0, 1454.9274876753595) -= (0, 1454.9274876753595) 
(14, 1540.40125) += (0, 565.3617601057714) -= (0, 565.3617601057714) 
};

\end{semilogyaxis}
\end{tikzpicture}
\end{subfigure}
\hfill
\begin{subfigure}[t]{0.32\linewidth}
    \centering
    \begin{tikzpicture}[scale=\fscale]
\begin{semilogyaxis}[
    title={Grid world [$7\times 7$]},
    xlabel={Game size [\# of timesteps]},
    ylabel style={align=center},
    ylabel={Runtime [s]},
    legend pos=north west,
    ymajorgrids=true,
    grid style=dashed,
    scaled ticks=false, 
    title style={font=\titfont},
    label style={font=\labfont},
    tick label style={/pgf/number format/fixed, font=\tickfont}  
]

\legend{rb = 1, rb = 2, rb = 3, rb = 4, rb = 5}
\addplot[color=black,mark=o,error bars/.cd,y dir=both, y explicit]
coordinates {
(6, 16.8005) += (0, 4.219375887685162) -= (0, 4.219375887685162) 
(7, 13.173684210526316) += (0, 0.20190497812294758) -= (0, 0.20190497812294758) 
(8, 12.108) += (0, 0.1380953141474166) -= (0, 0.1380953141474166) 
(9, 22.351052631578955) += (0, 1.4819488699217933) -= (0, 1.4819488699217933) 
(10, 30.139999999999997) += (0, 2.588103247757444) -= (0, 2.588103247757444) 
(11, 92.14368421052633) += (0, 9.756589442526284) -= (0, 9.756589442526284) 
(12, 108.8885) += (0, 11.01157401328257) -= (0, 11.01157401328257) 
(13, 227.14315789473685) += (0, 54.16643870425784) -= (0, 54.16643870425784) 
(14, 305.561052631579) += (0, 67.92066303986364) -= (0, 67.92066303986364) 
};

\addplot[color=black,mark=square,error bars/.cd,y dir=both, y explicit]
coordinates {
(6, 10.936000000000002) += (0, 0.05528109984434101) -= (0, 0.05528109984434101) 
(7, 11.6625) += (0, 0.1081297344567359) -= (0, 0.1081297344567359) 
(8, 12.946000000000002) += (0, 0.3121195789740645) -= (0, 0.3121195789740645) 
(9, 40.152631578947364) += (0, 3.6840872326905147) -= (0, 3.6840872326905147) 
(10, 55.4905) += (0, 6.339572269939785) -= (0, 6.339572269939785) 
(11, 171.68526315789472) += (0, 21.032507384572995) -= (0, 21.032507384572995) 
(12, 226.16578947368419) += (0, 32.2373092607278) -= (0, 32.2373092607278) 
(13, 481.2358823529412) += (0, 106.34149987138697) -= (0, 106.34149987138697) 
(14, 725.1005555555556) += (0, 148.13133807288347) -= (0, 148.13133807288347) 
};

\addplot[color=black,mark=pentagon,error bars/.cd,y dir=both, y explicit]
coordinates {
(6, 10.738421052631578) += (0, 0.03576452518805114) -= (0, 0.03576452518805114) 
(7, 11.590526315789473) += (0, 0.1211964566391175) -= (0, 0.1211964566391175) 
(8, 11.766) += (0, 0.13475591970201845) -= (0, 0.13475591970201845) 
(9, 42.0378947368421) += (0, 5.294483092404648) -= (0, 5.294483092404648) 
(10, 49.89105263157896) += (0, 6.724813347471112) -= (0, 6.724813347471112) 
(11, 206.87736842105264) += (0, 46.07642946186745) -= (0, 46.07642946186745) 
(12, 266.8836842105263) += (0, 71.34312115670595) -= (0, 71.34312115670595) 
(13, 484.5552631578947) += (0, 108.39810908875033) -= (0, 108.39810908875033) 
(14, 589.8294736842106) += (0, 146.18972343998155) -= (0, 146.18972343998155) 
};

\end{semilogyaxis}
\end{tikzpicture}
\end{subfigure}
    \caption{Computation times of the double oracle algorithm for grid world contested logistics scenarios with \textit{uniform} edge interdiction costs.}
    \label{fig:runtimes:uniform}
\end{figure}
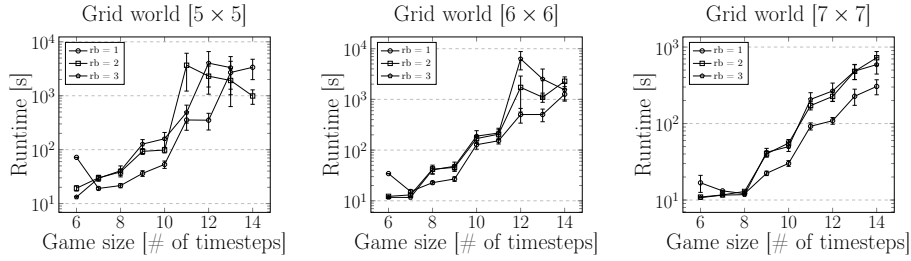

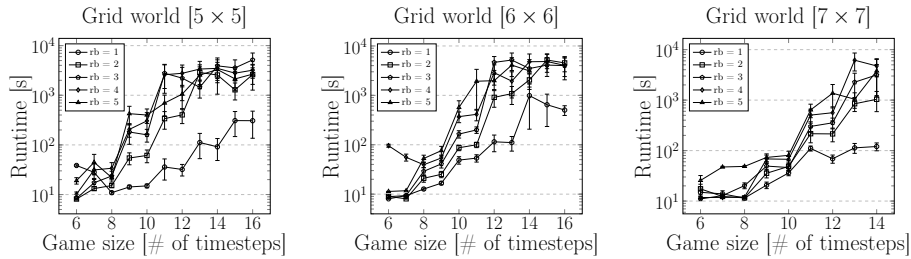
\begin{figure}[t]
    \centering
\begin{subfigure}[t]{0.32\linewidth}
    \centering
    \begin{tikzpicture}[scale=\fscale]
\begin{semilogyaxis}[
    title={Grid world [$5\times 5$]},
    xlabel={Game size [\# of timesteps]},
    ylabel style={align=center},
    ylabel={Runtime [s]},
    legend pos=north west,
    ymajorgrids=true,
    grid style=dashed,
    scaled ticks=false, 
    title style={font=\titfont},
    label style={font=\labfont},
    tick label style={/pgf/number format/fixed, font=\tickfont}  
]

\legend{rb = 1, rb = 2, rb = 3, rb = 4, rb = 5}
\addplot[color=black,mark=o,error bars/.cd,y dir=both, y explicit]
coordinates {
(6, 38.51233333333333) += (0, 1.2828516800278165) -= (0, 1.2828516800278165)
(7, 28.00866666666667) += (0, 2.339212694615721) -= (0, 2.339212694615721)
(8, 10.914) += (0, 0.7045239038063099) -= (0, 0.7045239038063099)
(9, 14.222758620689653) += (0, 1.1718125088088613) -= (0, 1.1718125088088613)
(10, 14.942333333333329) += (0, 1.2660524841886716) -= (0, 1.2660524841886716)
(11, 35.866) += (0, 16.286152778413907) -= (0, 16.286152778413907)
(12, 32.02833333333333) += (0, 8.389683608673204) -= (0, 8.389683608673204)
(13, 111.67800000000001) += (0, 58.81354850048066) -= (0, 58.81354850048066)
(14, 91.44633333333333) += (0, 42.96792441329897) -= (0, 42.96792441329897)
(15, 308.7616666666666) += (0, 161.5109381670646) -= (0, 161.5109381670646)
(16, 307.29600000000005) += (0, 171.17729190668294) -= (0, 171.17729190668294)
};

\addplot[color=black,mark=square,error bars/.cd,y dir=both, y explicit]
coordinates {
(6, 8.119333333333335) += (0, 0.07123529625789586) -= (0, 0.07123529625789586)
(7, 13.186333333333334) += (0, 1.1191289185784037) -= (0, 1.1191289185784037)
(8, 15.176551724137934) += (0, 1.3705065652109165) -= (0, 1.3705065652109165)
(9, 54.46366666666665) += (0, 14.682512175689554) -= (0, 14.682512175689554)
(10, 61.36133333333333) += (0, 17.627591015115154) -= (0, 17.627591015115154)
(11, 337.9503333333333) += (0, 130.9918335231868) -= (0, 130.9918335231868)
(12, 402.6279999999999) += (0, 133.8696611534269) -= (0, 133.8696611534269)
(13, 2873.228965517242) += (0, 1228.0130217661288) -= (0, 1228.0130217661288)
(14, 2557.742142857144) += (0, 1506.1005390066514) -= (0, 1506.1005390066514)
(15, 1286.2375) += (0, 482.03269909126146) -= (0, 482.03269909126146)
(16, 2614.5340909090905) += (0, 959.1656021322414) -= (0, 959.1656021322414)
};

\addplot[color=black,mark=pentagon,error bars/.cd,y dir=both, y explicit]
coordinates {
(6, 8.352333333333334) += (0, 0.1664039423530925) -= (0, 0.1664039423530925)
(7, 17.28933333333333) += (0, 2.8329628263496374) -= (0, 2.8329628263496374)
(8, 23.646666666666672) += (0, 5.799325829493317) -= (0, 5.799325829493317)
(9, 181.81266666666664) += (0, 79.01751319472747) -= (0, 79.01751319472747)
(10, 159.62966666666668) += (0, 45.88920347187915) -= (0, 45.88920347187915)
(11, 2779.902) += (0, 1411.6918711420697) -= (0, 1411.6918711420697)
(12, 2213.4372413793103) += (0, 1434.6137609259686) -= (0, 1434.6137609259686)
(13, 1448.595) += (0, 571.5229426269781) -= (0, 571.5229426269781)
(14, 3910.698518518518) += (0, 1733.271170640973) -= (0, 1733.271170640973)
(15, 3559.4404000000004) += (0, 1598.5134074535713) -= (0, 1598.5134074535713)
(16, 5163.410769230772) += (0, 1932.0858478240095) -= (0, 1932.0858478240095)
};

\addplot[color=black,mark=diamond,error bars/.cd,y dir=both, y explicit]
coordinates {
(6, 10.162) += (0, 1.1034214709387957) -= (0, 1.1034214709387957)
(7, 31.327999999999996) += (0, 10.224393857198894) -= (0, 10.224393857198894)
(8, 33.828333333333326) += (0, 10.14015245167917) -= (0, 10.14015245167917)
(9, 205.3373333333333) += (0, 62.602592803587626) -= (0, 62.602592803587626)
(10, 304.8496666666667) += (0, 141.6209721883005) -= (0, 141.6209721883005)
(11, 2594.962500000001) += (0, 1498.805220502911) -= (0, 1498.805220502911)
(12, 2768.9178571428574) += (0, 1292.5241051855894) -= (0, 1292.5241051855894)
(13, 3372.540384615385) += (0, 1441.5626775195656) -= (0, 1441.5626775195656)
(14, 3481.6312) += (0, 1693.2894945640887) -= (0, 1693.2894945640887)
(15, 2786.4955999999993) += (0, 811.7469231382155) -= (0, 811.7469231382155)
(16, 3195.38125) += (0, 995.6358397996686) -= (0, 995.6358397996686)
};

\addplot[color=black,mark=triangle,error bars/.cd,y dir=both, y explicit]
coordinates {
(6, 19.05047619047619) += (0, 2.8091010658546876) -= (0, 2.8091010658546876)
(7, 44.66555555555556) += (0, 19.7873645102851) -= (0, 19.7873645102851)
(8, 21.945714285714285) += (0, 2.1877188131572525) -= (0, 2.1877188131572525)
(9, 425.4886206896552) += (0, 173.57416522960358) -= (0, 173.57416522960358)
(10, 399.9220000000001) += (0, 129.241051507656) -= (0, 129.241051507656)
(11, 696.4852000000002) += (0, 310.9488282998346) -= (0, 310.9488282998346)
(12, 1053.5791666666667) += (0, 529.6517478788578) -= (0, 529.6517478788578)
(13, 2550.947826086957) += (0, 1180.8068451745403) -= (0, 1180.8068451745403)
(14, 3300.2576190476193) += (0, 1306.4835352495072) -= (0, 1306.4835352495072)
(15, 2133.0355555555557) += (0, 899.6471868439448) -= (0, 899.6471868439448)
(16, 2662.9955555555553) += (0, 1394.111747916186) -= (0, 1394.111747916186)
};

\end{semilogyaxis}
\end{tikzpicture}
\end{subfigure}
\hfill
\begin{subfigure}[t]{0.32\linewidth}
    \centering
    \begin{tikzpicture}[scale=\fscale]
\begin{semilogyaxis}[
    title={Grid world [$6\times 6$]},
    xlabel={Game size [\# of timesteps]},
    ylabel style={align=center},
    ylabel={Runtime [s]},
    legend pos=north west,
    ymajorgrids=true,
    grid style=dashed,
    scaled ticks=false, 
    title style={font=\titfont},
    label style={font=\labfont},
    tick label style={/pgf/number format/fixed, font=\tickfont}  
]

\legend{rb = 1, rb = 2, rb = 3, rb = 4, rb = 5}
\addplot[color=black,mark=o,error bars/.cd,y dir=both, y explicit]
coordinates {
(6, 8.045333333333334) += (0, 0.15105584971063113) -= (0, 0.15105584971063113) 
(7, 9.322333333333336) += (0, 0.3851271422644893) -= (0, 0.3851271422644893) 
(8, 12.698333333333332) += (0, 0.6468224868554864) -= (0, 0.6468224868554864) 
(9, 16.688333333333333) += (0, 1.1998755762442341) -= (0, 1.1998755762442341) 
(10, 48.902333333333345) += (0, 8.40332054898789) -= (0, 8.40332054898789) 
(11, 53.69066666666668) += (0, 9.092639273070914) -= (0, 9.092639273070914) 
(12, 114.70933333333332) += (0, 42.460032021832234) -= (0, 42.460032021832234) 
(13, 110.65333333333334) += (0, 36.33671815510483) -= (0, 36.33671815510483) 
(14, 987.0109999999999) += (0, 780.3950868460242) -= (0, 780.3950868460242) 
(15, 646.1106666666667) += (0, 411.95404601599694) -= (0, 411.95404601599694) 
(16, 504.06931034482767) += (0, 112.18962921733855) -= (0, 112.18962921733855) 
};

\addplot[color=black,mark=square,error bars/.cd,y dir=both, y explicit]
coordinates {
(6, 8.993) += (0, 0.5335117122673473) -= (0, 0.5335117122673473) 
(7, 8.042666666666669) += (0, 0.07172574416120703) -= (0, 0.07172574416120703) 
(8, 20.966333333333335) += (0, 3.5451691425152267) -= (0, 3.5451691425152267) 
(9, 25.201666666666664) += (0, 3.6609114838685057) -= (0, 3.6609114838685057) 
(10, 86.988) += (0, 11.098350508071006) -= (0, 11.098350508071006) 
(11, 99.9983333333333) += (0, 12.488161558036348) -= (0, 12.488161558036348) 
(12, 910.7836666666668) += (0, 336.64159173122016) -= (0, 336.64159173122016) 
(13, 1083.013) += (0, 429.3825680337976) -= (0, 429.3825680337976) 
(14, 2058.920357142857) += (0, 741.311867794474) -= (0, 741.311867794474) 
(15, 5193.379285714285) += (0, 1725.444084810608) -= (0, 1725.444084810608) 
(16, 4538.722000000001) += (0, 1537.654348629973) -= (0, 1537.654348629973) 
};

\addplot[color=black,mark=pentagon,error bars/.cd,y dir=both, y explicit]
coordinates {
(6, 8.897) += (0, 0.37852926214343124) -= (0, 0.37852926214343124) 
(7, 9.353666666666667) += (0, 0.23188110475693144) -= (0, 0.23188110475693144) 
(8, 29.021) += (0, 5.222093523685204) -= (0, 5.222093523685204) 
(9, 40.955333333333336) += (0, 7.144265874197715) -= (0, 7.144265874197715) 
(10, 164.50900000000001) += (0, 27.795092638685404) -= (0, 27.795092638685404) 
(11, 199.70299999999997) += (0, 30.03790158331197) -= (0, 30.03790158331197) 
(12, 4655.172142857143) += (0, 1455.3678562887847) -= (0, 1455.3678562887847) 
(13, 5188.368846153846) += (0, 2011.4623462868865) -= (0, 2011.4623462868865) 
(14, 3492.7854545454543) += (0, 1800.9295270502987) -= (0, 1800.9295270502987) 
(15, 4104.534210526316) += (0, 1688.3233697295784) -= (0, 1688.3233697295784) 
(16, 4000.960555555556) += (0, 1936.2775728272131) -= (0, 1936.2775728272131) 
};

\addplot[color=black,mark=diamond,error bars/.cd,y dir=both, y explicit]
coordinates {
(6, 95.714) += (0, 6.6168672936374655) -= (0, 6.6168672936374655) 
(7, 55.67080000000001) += (0, 8.954858121340246) -= (0, 8.954858121340246) 
(8, 39.43285714285714) += (0, 4.939541831706727) -= (0, 4.939541831706727) 
(9, 54.00481481481481) += (0, 7.706557099492668) -= (0, 7.706557099492668) 
(10, 370.9034482758622) += (0, 91.00931025733699) -= (0, 91.00931025733699) 
(11, 413.1316666666667) += (0, 107.83054335293538) -= (0, 107.83054335293538) 
(12, 2872.4918181818184) += (0, 1261.968488392274) -= (0, 1261.968488392274) 
(13, 1941.4609523809524) += (0, 1056.045281835311) -= (0, 1056.045281835311) 
(14, 4790.853) += (0, 2089.7571448326235) -= (0, 2089.7571448326235) 
(15, 4843.557894736842) += (0, 1917.4970324129886) -= (0, 1917.4970324129886) 
(16, 4043.782105263158) += (0, 1640.5732600547803) -= (0, 1640.5732600547803) 
};

\addplot[color=black,mark=triangle,error bars/.cd,y dir=both, y explicit]
coordinates {
(6, 11.344000000000001) += (0, 0.07641477861516448) -= (0, 0.07641477861516448) 
(7, 11.73433333333333) += (0, 0.14068688910585636) -= (0, 0.14068688910585636) 
(8, 52.75166666666667) += (0, 8.723207151481777) -= (0, 8.723207151481777) 
(9, 76.03666666666668) += (0, 16.712909301435452) -= (0, 16.712909301435452) 
(10, 569.2244827586208) += (0, 207.58354717327896) -= (0, 207.58354717327896) 
(11, 1931.7936666666667) += (0, 1430.1458703286326) -= (0, 1430.1458703286326) 
(12, 1984.9778260869564) += (0, 674.3612977184093) -= (0, 674.3612977184093) 
(13, 4100.388999999999) += (0, 1699.3507548419934) -= (0, 1699.3507548419934) 
(14, 3124.472142857143) += (0, 1652.1357393396863) -= (0, 1652.1357393396863) 
};

\end{semilogyaxis}
\end{tikzpicture}
\end{subfigure}
\hfill
\begin{subfigure}[t]{0.32\linewidth}
    \centering
    \begin{tikzpicture}[scale=\fscale]
\begin{semilogyaxis}[
    title={Grid world [$7\times 7$]},
    xlabel={Game size [\# of timesteps]},
    ylabel style={align=center},
    ylabel={Runtime [s]},
    legend pos=north west,
    ymajorgrids=true,
    grid style=dashed,
    scaled ticks=false, 
    title style={font=\titfont},
    label style={font=\labfont},
    tick label style={/pgf/number format/fixed, font=\tickfont}  
]

\legend{rb = 1, rb = 2, rb = 3, rb = 4, rb = 5}
\addplot[color=black,mark=o,error bars/.cd,y dir=both, y explicit]
coordinates {
(6, 14.903333333333334) += (0, 2.4122598901813586) -= (0, 2.4122598901813586) 
(7, 13.882857142857143) += (0, 2.0991684456625452) -= (0, 2.0991684456625452) 
(8, 11.48) += (0, 0.1305893235627301) -= (0, 0.1305893235627301) 
(9, 20.707142857142856) += (0, 3.063803047789831) -= (0, 3.063803047789831) 
(10, 36.0775) += (0, 4.329965667135413) -= (0, 4.329965667135413) 
(11, 110.42150000000001) += (0, 12.426868964063312) -= (0, 12.426868964063312) 
(12, 68.85600000000001) += (0, 12.708169407283274) -= (0, 12.708169407283274) 
(13, 113.2435) += (0, 25.11924630401117) -= (0, 25.11924630401117) 
(14, 120.431) += (0, 21.36064397437493) -= (0, 21.36064397437493) 
};

\addplot[color=black,mark=square,error bars/.cd,y dir=both, y explicit]
coordinates {
(6, 17.435000000000002) += (0, 4.116335922707406) -= (0, 4.116335922707406) 
(7, 12.337499999999999) += (0, 0.3098545165534486) -= (0, 0.3098545165534486) 
(8, 11.553333333333333) += (0, 0.1934769581457526) -= (0, 0.1934769581457526) 
(9, 35.5425) += (0, 6.8056113271119365) -= (0, 6.8056113271119365) 
(10, 48.54444444444445) += (0, 10.709981729301658) -= (0, 10.709981729301658) 
(11, 215.80849999999995) += (0, 73.16160824278708) -= (0, 73.16160824278708) 
(12, 214.06650000000005) += (0, 64.81538798478661) -= (0, 64.81538798478661) 
(13, 846.0245000000001) += (0, 252.5552859990055) -= (0, 252.5552859990055) 
(14, 1037.1874999999998) += (0, 446.26545138744433) -= (0, 446.26545138744433) 
};

\addplot[color=black,mark=pentagon,error bars/.cd,y dir=both, y explicit]
coordinates {
(6, 11.01) += (0, 0.13735598518691014) -= (0, 0.13735598518691014) 
(7, 12.831249999999999) += (0, 0.2501459395461549) -= (0, 0.2501459395461549) 
(8, 20.237777777777772) += (0, 2.6720304273975377) -= (0, 2.6720304273975377) 
(9, 49.31222222222223) += (0, 9.476220938306838) -= (0, 9.476220938306838) 
(10, 46.17111111111111) += (0, 5.831286651532891) -= (0, 5.831286651532891) 
(11, 300.9745) += (0, 70.65149057344792) -= (0, 70.65149057344792) 
(12, 354.9085) += (0, 85.46175119307374) -= (0, 85.46175119307374) 
(13, 2266.702857142857) += (0, 1142.5104435675457) -= (0, 1142.5104435675457) 
(14, 3153.435999999999) += (0, 1922.44342560062) -= (0, 1922.44342560062) 
};

\addplot[color=black,mark=diamond,error bars/.cd,y dir=both, y explicit]
coordinates {
(6, 11.744444444444445) += (0, 0.5084975448184775) -= (0, 0.5084975448184775) 
(7, 11.805555555555554) += (0, 0.21002718812831467) -= (0, 0.21002718812831467) 
(8, 12.01125) += (0, 0.2937955891675123) -= (0, 0.2937955891675123) 
(9, 68.3411111111111) += (0, 12.912233265443843) -= (0, 12.912233265443843) 
(10, 67.1888888888889) += (0, 11.390646314269885) -= (0, 11.390646314269885) 
(11, 504.58684210526303) += (0, 155.89034310790316) -= (0, 155.89034310790316) 
(12, 562.6035) += (0, 173.30413361801874) -= (0, 173.30413361801874) 
(13, 6172.482666666667) += (0, 2397.0590012013568) -= (0, 2397.0590012013568) 
(14, 4748.6537499999995) += (0, 1851.936605306473) -= (0, 1851.936605306473) 
};

\addplot[color=black,mark=triangle,error bars/.cd,y dir=both, y explicit]
coordinates {
(6, 25.63888888888889) += (0, 6.403791941855069) -= (0, 6.403791941855069) 
(7, 47.31444444444444) += (0, 1.424462970065006) -= (0, 1.424462970065006) 
(8, 48.596666666666664) += (0, 1.578109804657317) -= (0, 1.578109804657317) 
(9, 73.20666666666666) += (0, 12.825485479223692) -= (0, 12.825485479223692) 
(10, 80.54555555555557) += (0, 16.36560970035023) -= (0, 16.36560970035023) 
(11, 639.3394999999998) += (0, 191.0371810758521) -= (0, 191.0371810758521) 
(12, 1368.9334999999996) += (0, 669.1839942832478) -= (0, 669.1839942832478) 
(13, 1041.793076923077) += (0, 353.78570148446096) -= (0, 353.78570148446096) 
(14, 3779.7430769230773) += (0, 2613.058346720294) -= (0, 2613.058346720294) 
};

\end{semilogyaxis}
\end{tikzpicture}
\end{subfigure}
    \caption{Computation times of the double oracle algorithm for grid world contested logistics scenarios with \textit{randomly assigned} edge interdiction costs.}
    \label{fig:runtimes}
\end{figure}

\subsection{Quantitative Evaluation on a Synthetic Grid World}

First, we evaluate the performance of the DO on simple grid world scenarios. In these scenarios, the physical graph consists of an \(N \times N\) grid. Blue designates all four corners as warehouses, with an additional warehouse located at the center of the grid. For even values of \(N\), the central warehouse is one of the four central nodes. Additionally, Blue has two trucks, initially positioned at opposing corners. These trucks have sufficient weight and volume limits to transport any available packages, and they can move along any single edge per time step. There are two types of packages, A and B, each with unit weights and volumes. The warehouse at the initial location \((0,0)\) of the first truck holds 4 units of A and 1 unit of B. The warehouse at the location \((N-1, N-1)\) of the other truck holds 1 unit of A and 3 units of B. The central warehouse supplies a single unit of each package. Only two warehouses have positive demands, located in the remaining two corners without the trucks. Both warehouses require 3 A units and 2 B units.

To generate random grid world scenarios, each edge is removed with a probability of 0.1. Each warehouse with a demand is assigned a uniformly random real payoff from the interval \([1,2]\). Unless the edges have a uniform cost, the cost is selected uniformly from the integer interval \([1,5]\). For statistical robustness, 20 instances of each game were constructed and solved, with average results reported alongside standard errors. Examples of the initial setup can be seen in the two physical graphs in Figure~\ref{fig:heatmaps}. In these maps, the starting locations of the two trucks are shown as blue nodes, while the demand nodes are orange. The central warehouse is purple, and each edge is annotated with its interdiction cost.

Figure~\ref{fig:runtimes:uniform} shows the average runtimes for the DO to solve grid scenarios of sizes \(5 \times 5\), \(6 \times 6\), and \(7 \times 7\), each with uniform edge interdiction costs of 1. Solving the game for Red's budget of 1 (denoted as \(rb=1\)) is clearly the easiest, with higher budgets presenting similar levels of difficulty. Notably, the game becomes trivial for a budget of 4, when the trucks can be completely cut off from reaching the demand nodes. In Figure~\ref{fig:runtimes}, where interdiction costs are random, the difference between budget 1 and higher budgets becomes even more pronounced.

To illustrate how the game value changes with Blue's horizon and Red's capabilities, we selected two typical scenarios and depicted the generated physical graphs, along with the resulting values as heatmaps, in Figure~\ref{fig:heatmaps}. In the first map, Red is able to completely cut off the connectors with a budget of 4, resulting in a game value of zero for Blue. In contrast, Red can more efficiently interdict Blue only if they possess a higher budget, and Blue's shorter horizon does not provide enough additional maneuverability to make a difference.

\begin{figure}[t]
    \centering
\begin{subfigure}[t]{0.24\linewidth}
    \centering
    \hspace*{-.3cm}\includegraphics[width=1.2\linewidth]{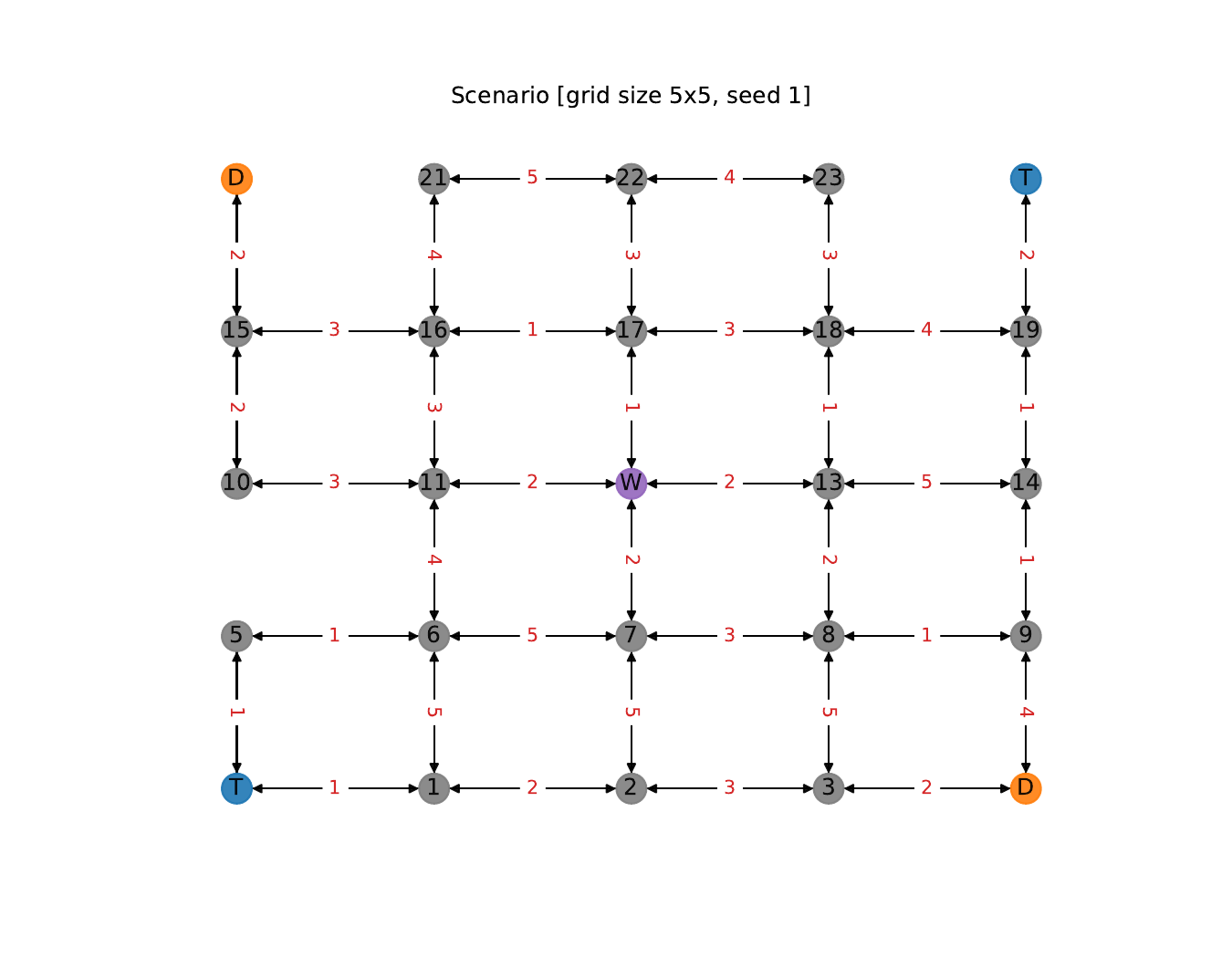}
\end{subfigure}
\hfill
\begin{subfigure}[t]{0.24\linewidth}
    \centering
    \includegraphics[width=.99\linewidth]{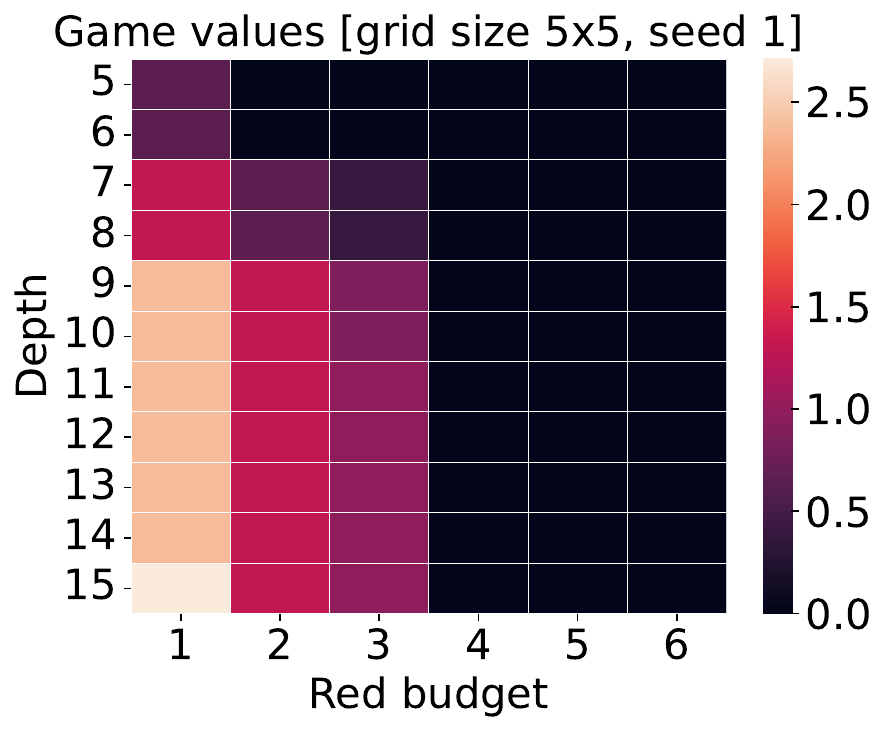}
\end{subfigure}
\hfill
\begin{subfigure}[t]{0.24\linewidth}
    \centering
    \hspace*{-.23cm}\includegraphics[width=1.2\linewidth]{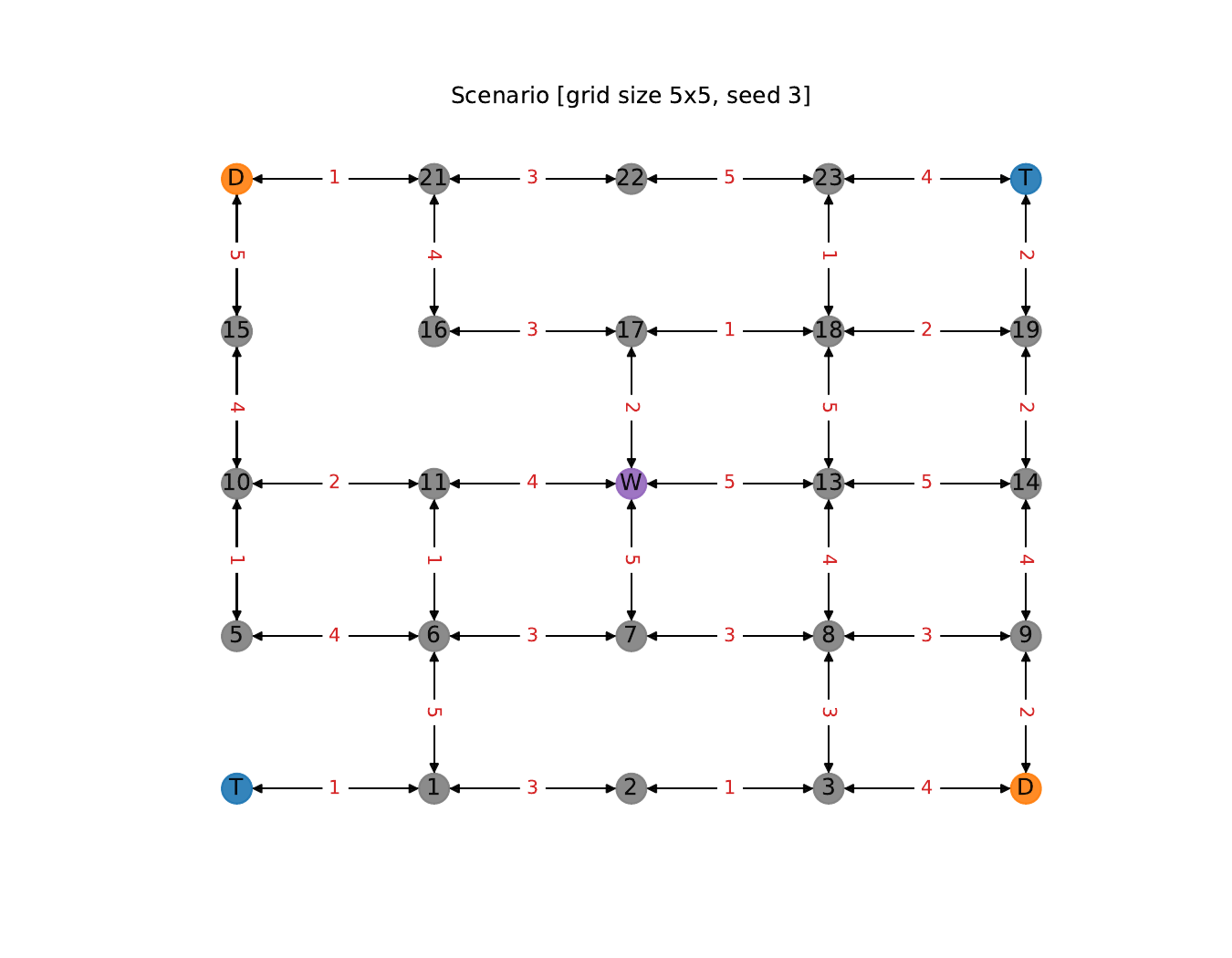}
\end{subfigure}
\hfill
\begin{subfigure}[t]{0.24\linewidth}
    \centering
    \includegraphics[width=.99\linewidth]{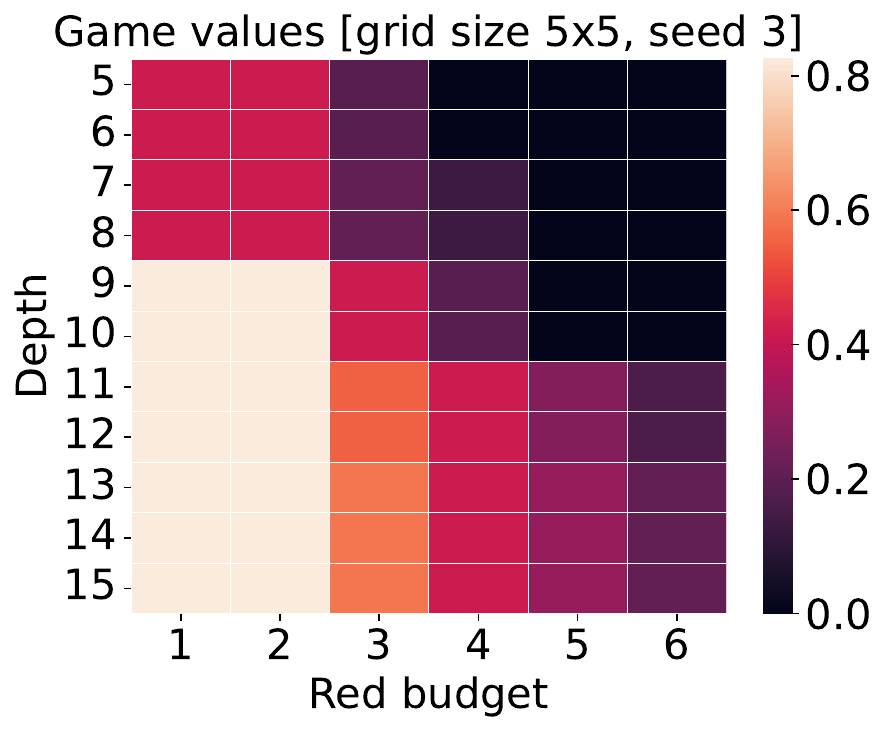}
\end{subfigure}
    \caption{The physical graphs for two random grid world contested logistics scenarios, and game values as functions of red's budget and blue's horizon. Each grid world has size $5\times 5$. Their corresponding game value heat maps (on the right) have Red's budget on the horizontal axis and time horizon on the vertical axis. Lighter color signifies higher game value.
    }
    \label{fig:heatmaps}
\end{figure}

\subsection{Qualitative Evaluation on Real-World Maps}

We also conducted experiments on 2 different maps around the world simulating CL scenarios. These are from (i) the United Kingdom (UK) based on railroads during World War 2, and (ii) Mariupol, a city heavily involved in the ongoing Russo-Ukrainian conflict. The goal of these experiments is not to evaluate the efficiency of our DO method, but rather to showcase (a) how real-world red and blue strategies will look like in practice, (b) the importance of strategic behavior (i.e., randomization) in both Red and Blue, and (c) the relatively low cost that blue pays to be robust to adversarial behavior, and conversely the extremely poor performance when ignoring existence of Red or when using heuristic solutions.

\subsubsection{Contested Logistics in the United Kingdom}
\begin{figure}[ht]
    \includegraphics[width=.99\linewidth]{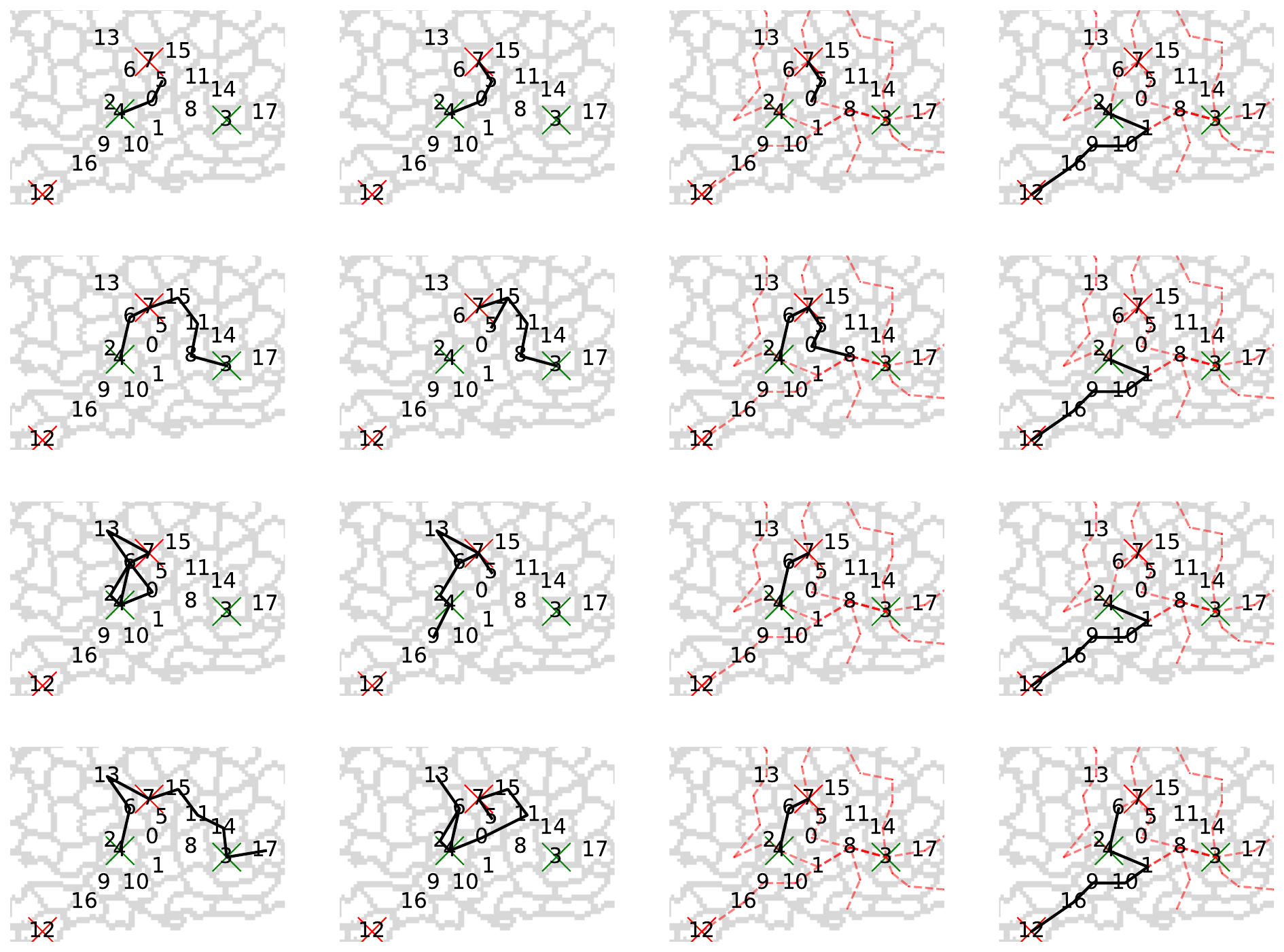}
    \caption{NE for Blue in the UK scenario. Each row corresponds to a logistics plan played with positive probabilities $4/9, 1/9, 5/21, 13/63$ respectively. 
    Light gray lines demarcate boundaries between provinces, numbers denote province labels: for simplicitly we have only included those encountered by some connector logistics plan. 
    Each column shows the movement of a single connector, in the order of Truck 1, Truck 2, Train 1, Train 2, which start at provinces 4, 5, 7, and 1 respectively.
    Connector paths are denoted by the solid black line line. Red crosses at 7 and 12 denote supply nodes, green crosses at 3 and 4 denote demand. Dotted lines denote railroads. }
    \label{fig:uk_blue_strats}
\end{figure}
\begin{figure}[ht]
    \includegraphics[width=.99\linewidth]{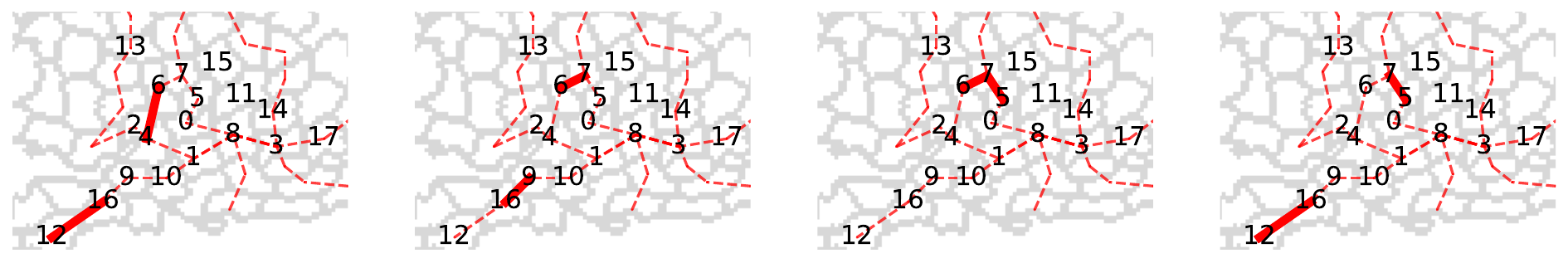}
    \caption{NE for Red in the UK scenario. Each subfigure corresponds to a single interdiction plan played with probabilities $7/18, 1/9, 7/18, 1/9$ respectively. Dotted lines denote railroads and thick red lines edges that are interdicted.}
    \label{fig:uk_red_strats}
\end{figure}
This scenario is based on the southern region of the United Kingdom. The region is broken into provinces, using data from the World War 2 based video game \textit{Heart of Iron IV} \cite{hoi4}. There are three essentially identical packages: boots, rifles and helmets; each soldier requires 1 unit of each to be equipped (note that we allow for ``fractional soldiers''). 
There are 4 connectors, comprising 2 trains and 2 trucks. Trucks can move between any two adjacent provinces and have a capacity of 5. Trains can only move between provinces connected by railroads, but enjoy a capacity of 20 (for this scenario, weight and size are identical quantities). There are two demand and supply nodes. Each supply node contains 20 of each package. There is unlimited demand for soldiers at each demand node. Red is able to interdict edges between any two adjacent provinces (recall that these are \textit{directed} edges) and has a budget of $2$. We set a time horizon of $10$ for Blue. The equilibrium strategies for Blue and Red are shown in Figures~\ref{fig:uk_blue_strats} and \ref{fig:uk_red_strats}, which has a Nash value of 9.259.
We discuss the most interesting aspects of the NE. 
\begin{itemize}
\item In all 4 logistics plans, Train 2 behaves essentially deterministically --- first, collect supply from 12, then deliver it to the demand at 4. Therefore, Red may interdict anywhere along this path (e.g., edge $16\rightarrow12$) and interdict Train 2 \textit{with certainty}. Surprisingly, we find that in interdiction plan 3, Red declines this ``freebie'', choosing instead to interdict edges $7 \rightarrow 6$ and $7 \rightarrow 5$.
\item Second, we observe that Train 1 coordinates with Trucks in a way such as to maximize ``throughput''. In logistics plan 3 and 4, we observe that Train 1 (which begins at a supply node 7) delivers directly to the demand node at 4 by moving along the left path comprising $7\rightarrow 6 \rightarrow 2$ and backward. Note that compared to Train 2, the distances between supply and demand is much shorter, so the throughput here is comparatively much higher. Blue also diversifies via logistics plan 1, where Train 1 moves to the right along $7 \rightarrow 5 \rightarrow 0$ instead. It then drops off supplies along the way at province 0, while allowing Trucks 1 and 2 to complete the ``last mile delivery'' to province 4. 
This hedges against Red always interdicting the left path $7 \rightarrow 6 \rightarrow 2$. This explain interdiction plan 3: by interdicting $7 \rightarrow 6$ and $6 \rightarrow 4$, Red completely shuts down the joint operation between Train 1 and the Trucks. 
\item We again observe hedging behavior in Logistics plan 2. Train 1 first takes the right path $7 \rightarrow  5 \rightarrow  0 \rightarrow 8$, drops off its packages (leaving last mile deliveries to Trucks), returns to 7 to pick up fresh supplies, and finally takes the left path  $7\rightarrow  6 \rightarrow  4$ to satisfy the demand at 4. Interestingly, this exposes Train 1 to interdictions both on the left and right. However, if interdiction is on the left (e.g., $7\rightarrow 6$ via interdiction plan 2), then at least the first batch of supplies would reach demand node 3. 
\item We point out that because supply and demand nodes are closer at the top of the region, the ``throughput'', i.e., demand satisfied per unit time is potentially much higher. Because of this, interdicting Train 2 all the time is not necessarily always a good idea, since Train 2 has to take a long trip to province 12 and finally 4 just for a single batch. 
\item Finally, we remark that Trucks may also operate independently of Train 1. Indeed, if this was not the case, then one would expect Blue to be very brittle. One example of this is seen in Logistics plan 1, Truck 1. Here, Truck 1 moves from province 4 to 7, collects supplies, and transports packages to 3. However, instead of moving directly from 7 to 3, by the path $7\rightarrow 5\rightarrow 8\rightarrow 3$, it takes an detour of $7\rightarrow 15\rightarrow 11\rightarrow 14\rightarrow 3$. This avoids overlapping with Train 1's right path, which contains the segment $7\rightarrow 5$. 
\end{itemize}

\subsubsection{Contested Logistics in Ukraine}
This scenario was constructed based on Ukraine's attack on a key rail bridge connecting the occupied city of Mariupol with Russia on January 7th, 2024. The destruction of this bridge not only disrupts immediate logistical operations but also poses significant long-term challenges to Russia's ability to sustain its military presence in southern Ukraine. Our experiment simulated this scenario using realistic geographical data combined with hypothetical logistics settings to assess the broader impact.

According to the news report~\cite{newsweek2024}, the cities this railway passes through include Mariupol, Donetsk, Taganrog, and Rostov-on-Don, so we defined the area of interest as a rectangle containing these cities. We constructed a realistic map within this area using the Open Street Map (OSM) database, which provides global geographical data, including information on roads, railways, and airports. We used the QGIS software to process the OSM data, extracting and visualizing the nodes and edges to construct accurate transportation networks. The resulting physical graph, with 17 nodes, is depicted on the right in Figure~\ref{fig:ukraine}.

\begin{figure}[t]
    \centering
\begin{subfigure}[t]{0.49\linewidth}
    \centering
    \includegraphics[width=.9\linewidth]{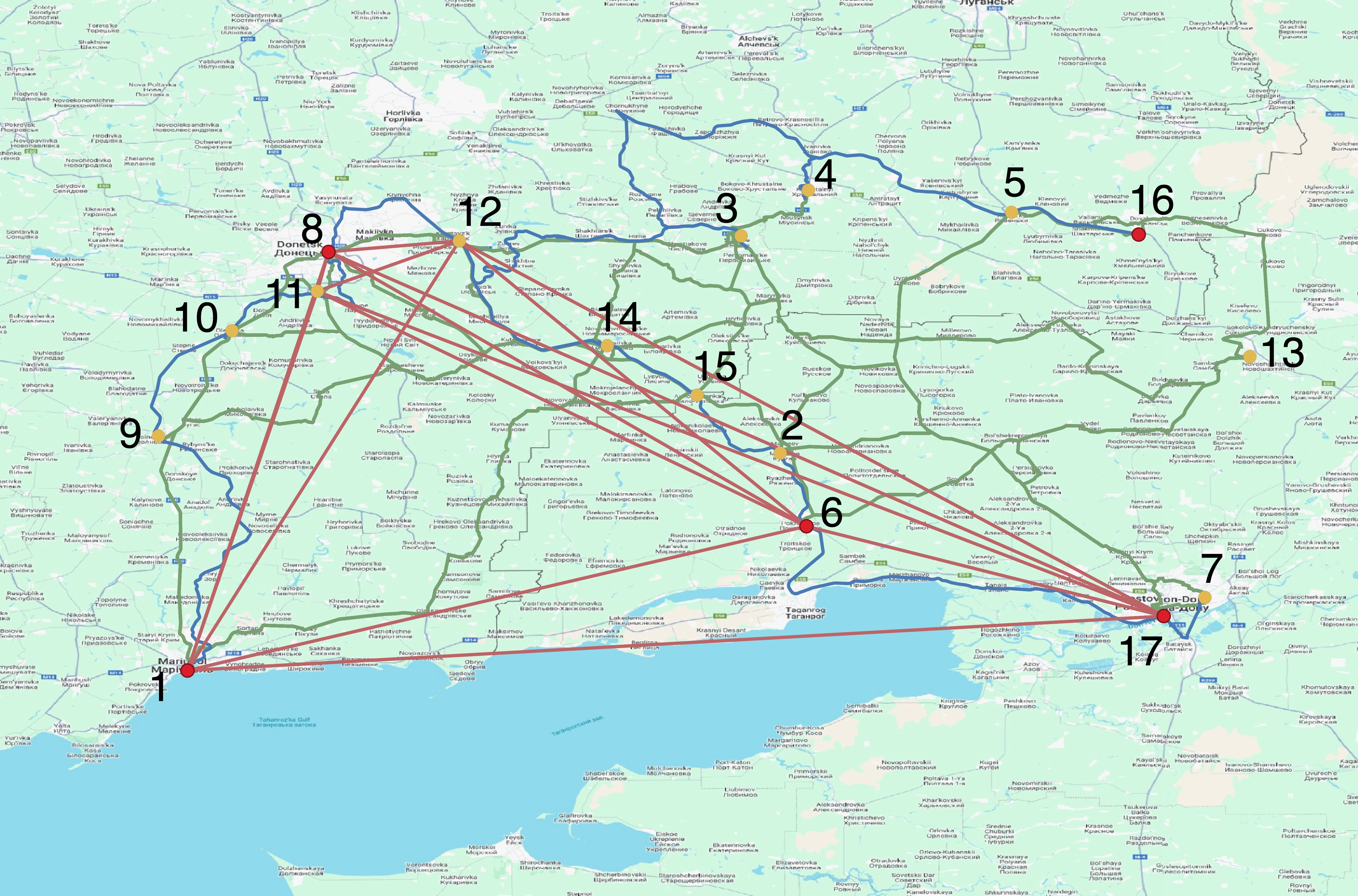}
\end{subfigure}
\hfill
\begin{subfigure}[t]{0.49\linewidth}
    \centering
    \includegraphics[width=.9\linewidth]{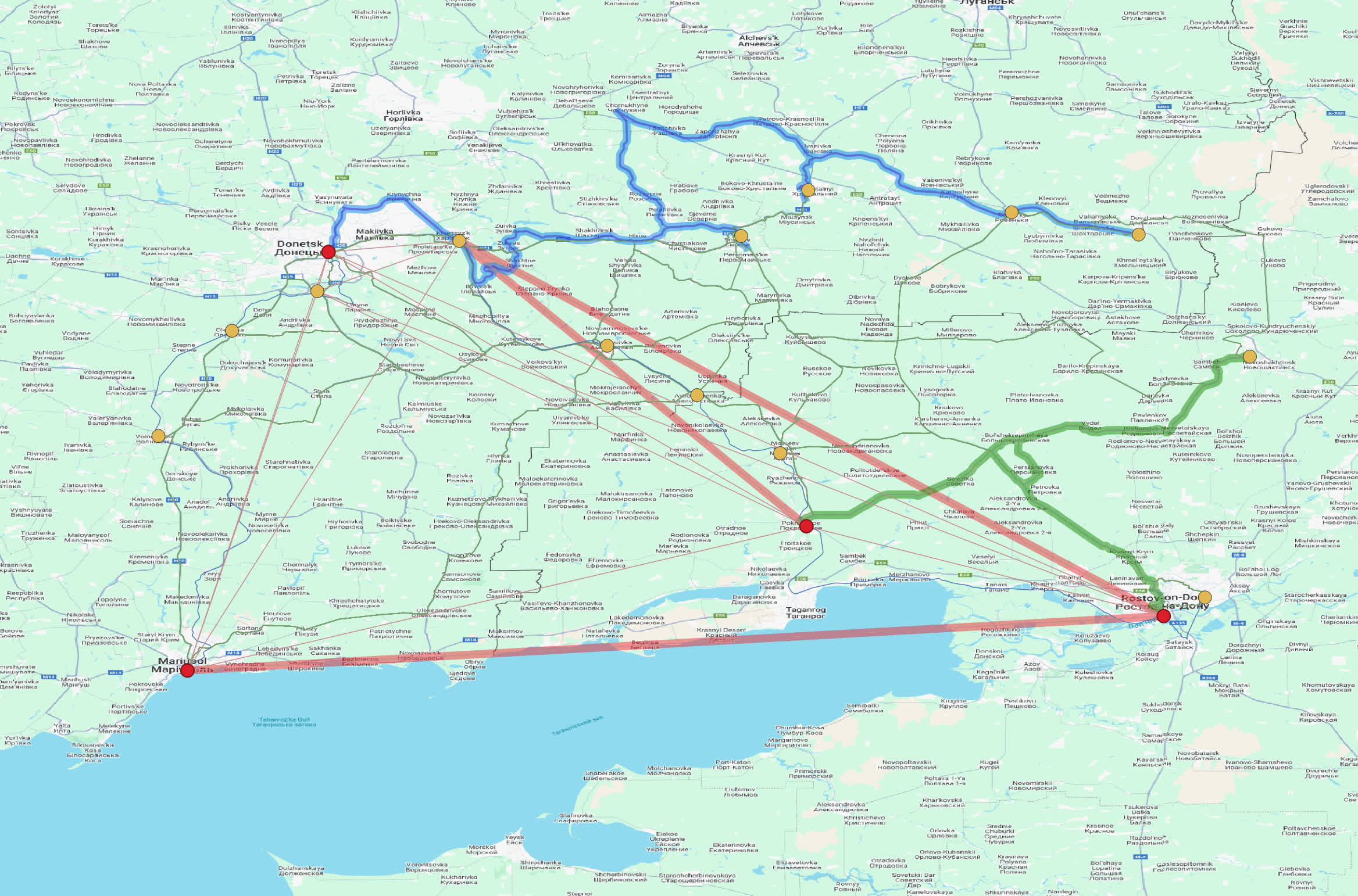}
\end{subfigure}
    \caption{The physical graph for the Ukrainian scenario. The train edges are depicted in blue, the truck edges in green, and the plane edges in red. On the left, the optimal logistics plan for 5 time steps and no Red.}
    \label{fig:ukraine}
\end{figure}

Ukraine and Russia are designated as Red and Blue, respectively. Blue employs three types of connectors: trains, trucks, and planes. To identify the train-accessible edges, we extracted nodes marked as \enquote{train stations} in the OSM data for the area of interest and selected 17 key stations. These included one station each in Mariupol (node 1), Donetsk (node 8), Taganrog (node 6), and Rostov-on-Don (node 17). These stations served as nodes in the train graph. The truck graph used the same nodes, assuming trucks could travel between any train stations. For the plane graph, we selected nodes tagged as \enquote{aerodrome} in the OSM data, representing airports, heliports, and airfields. The plane graph had fewer nodes due to the limited number of airports but included airports in the four major cities. For illustration, each node in the plane graph was considered the same as the closest node in the train and truck graphs, depicted as a single node in Figure~\ref{fig:ukraine}. We then defined edges for the connector graphs. For the train graph, we filtered railway paths in QGIS to connect the 17 stations, depicted in blue in Figure~\ref{fig:ukraine}. For the truck graph, we selected roads tagged as primary, secondary, or tertiary under \enquote{highway} in QGIS and identified the shortest paths connecting the 17 stations, depicted in green in Figure~\ref{fig:ukraine}. For the plane graph, we assumed direct flights between airports, defining edges as direct lines between them. We used QGIS to record the distance of each edge, essential for calculating the time for a connector to traverse an edge given its speed.

\begin{figure}[t]
    \centering
\begin{subfigure}[t]{0.19\linewidth}
    \centering
    \includegraphics[width=\linewidth]{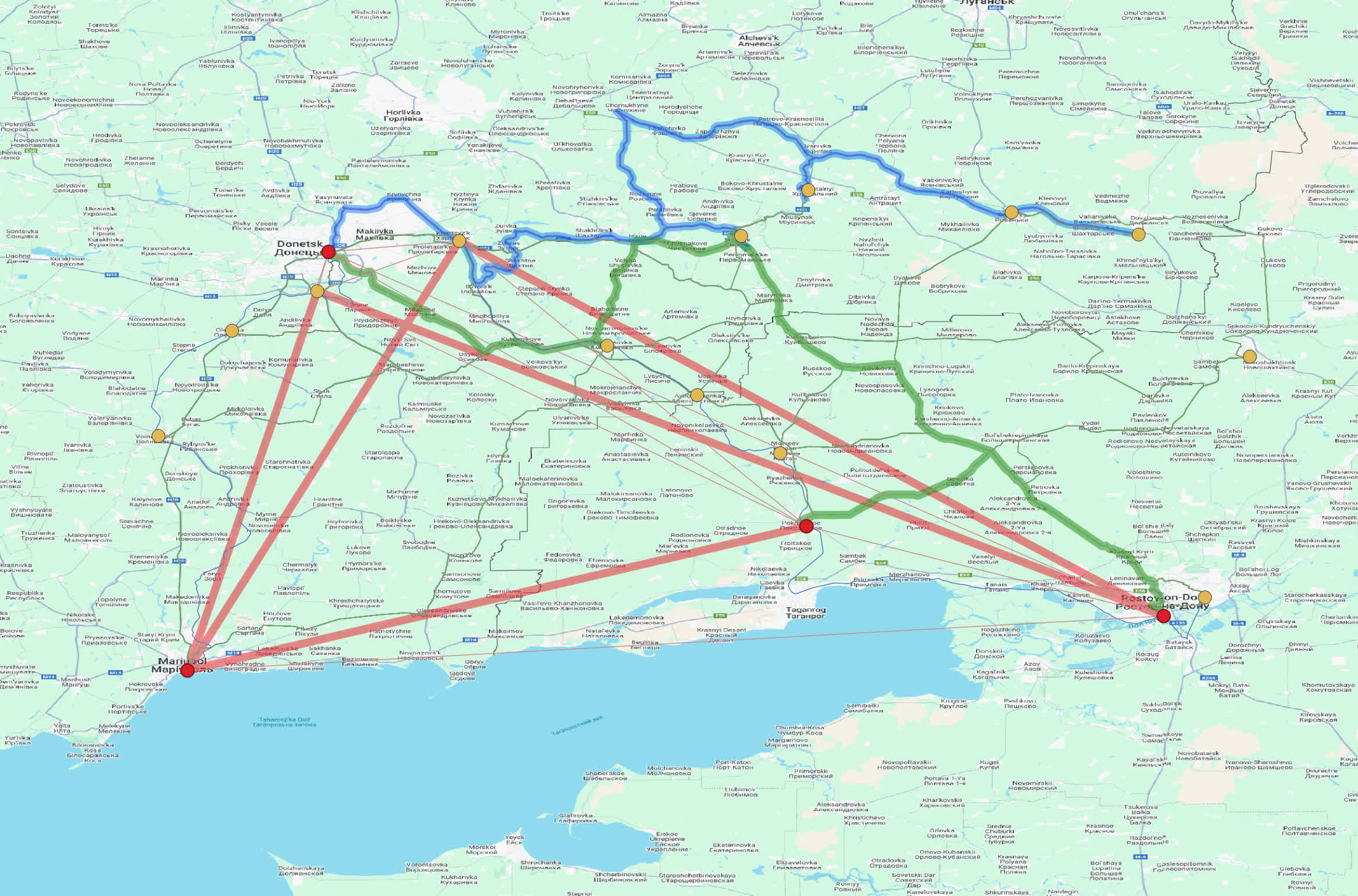}
\end{subfigure}
\hfill
\begin{subfigure}[t]{0.19\linewidth}
    \centering
    \includegraphics[width=\linewidth]{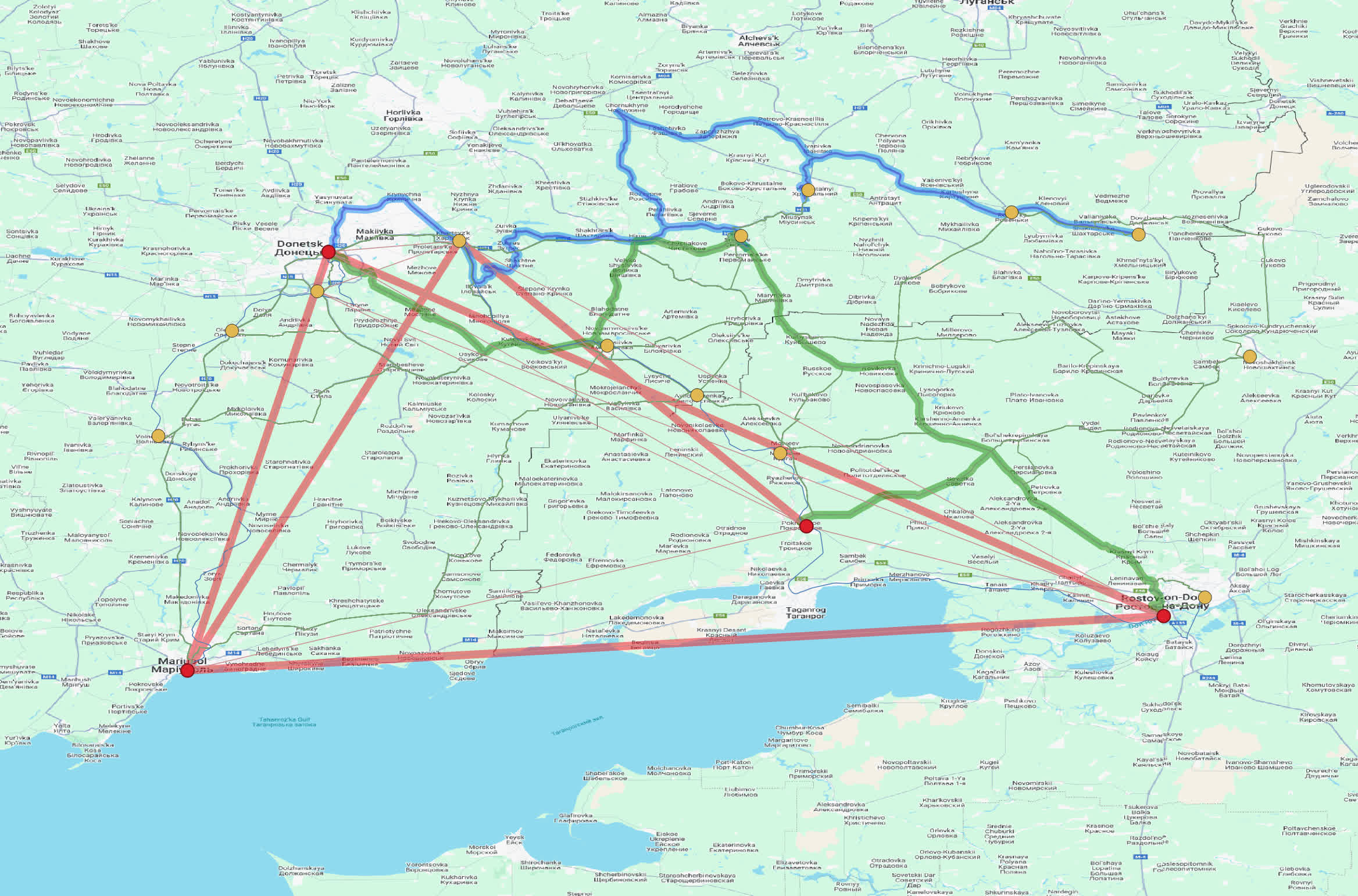}
\end{subfigure}
\hfill
\begin{subfigure}[t]{0.19\linewidth}
    \centering
    \includegraphics[width=\linewidth]{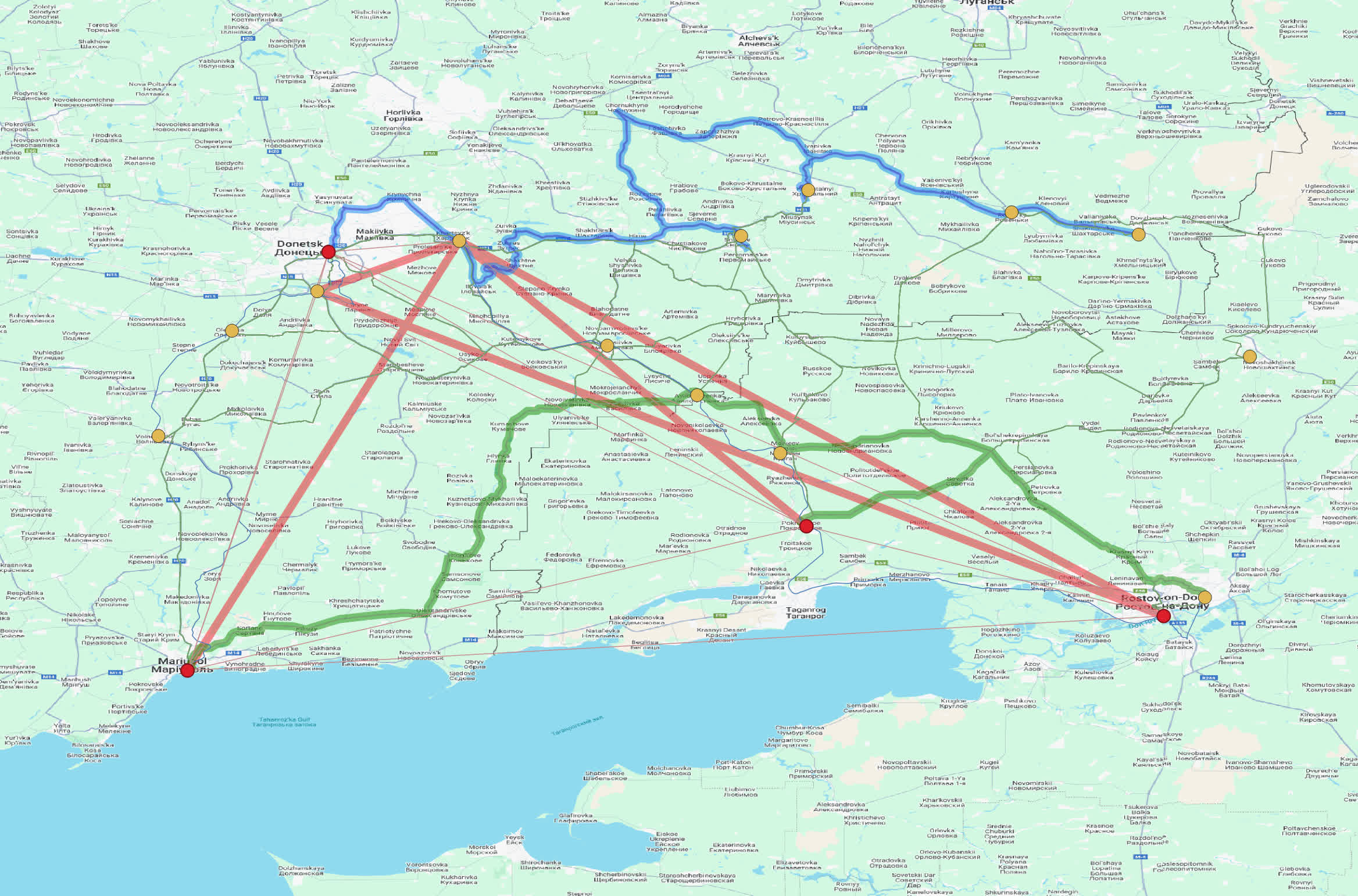}
\end{subfigure}
\hfill
\begin{subfigure}[t]{0.19\linewidth}
    \centering
    \includegraphics[width=\linewidth]{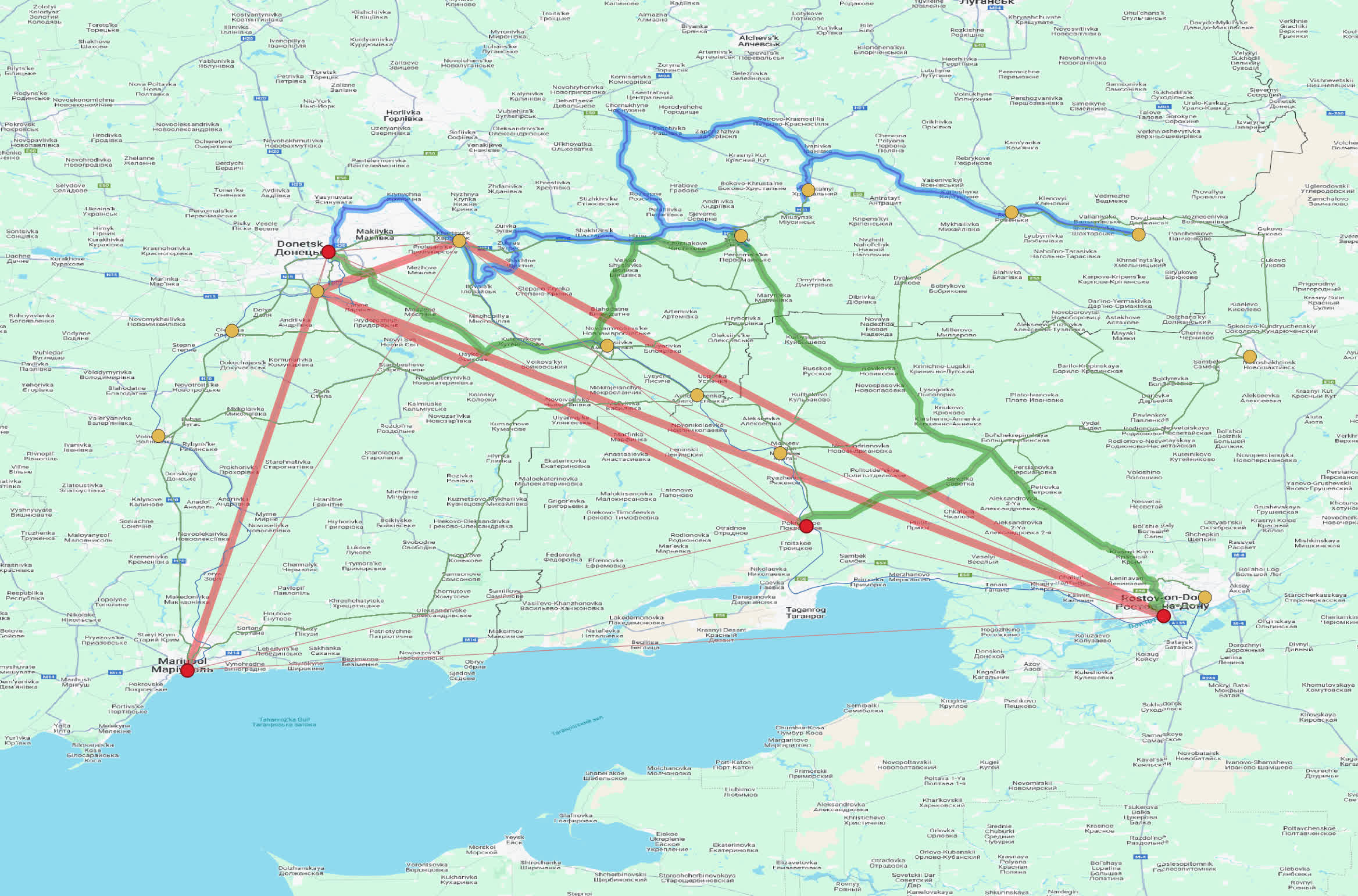}
\end{subfigure}
\hfill
\begin{subfigure}[t]{0.19\linewidth}
    \centering
    \includegraphics[width=\linewidth]{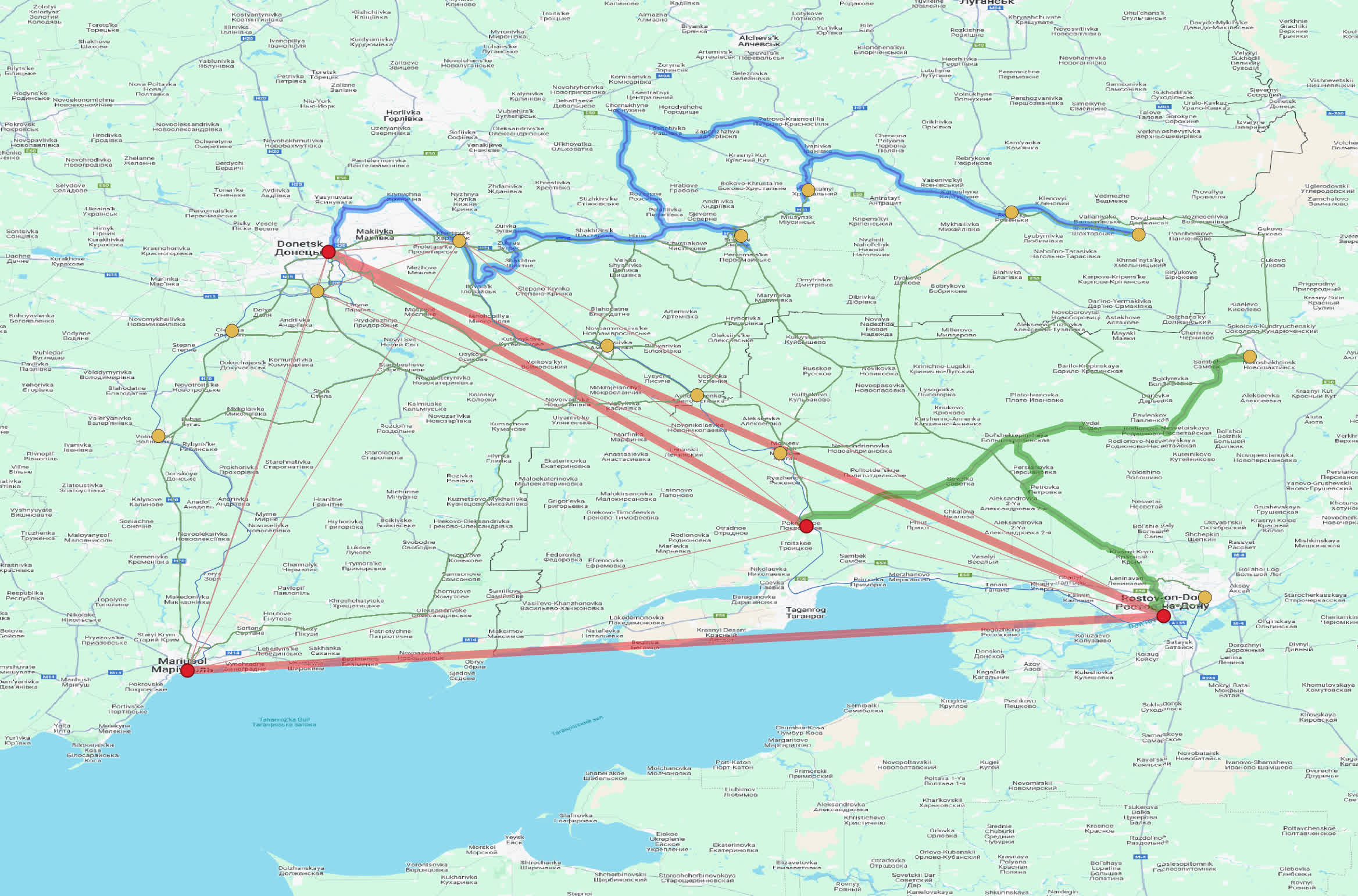}
\end{subfigure}
    \caption{NE for Blue in the Ukraine scenario. Each subfigure corresponds to a single logistics plan played with probabilities 0.339, 0.337, 0.093, 0.076 and 0.062, respectively. Train, Truck and Plane routes are depicted in blue, green and red.}
    \label{fig:ukraine:blue}
\end{figure}

\begin{figure}[t]
    \centering
\begin{subfigure}[t]{0.19\linewidth}
    \centering
    \includegraphics[width=\linewidth]{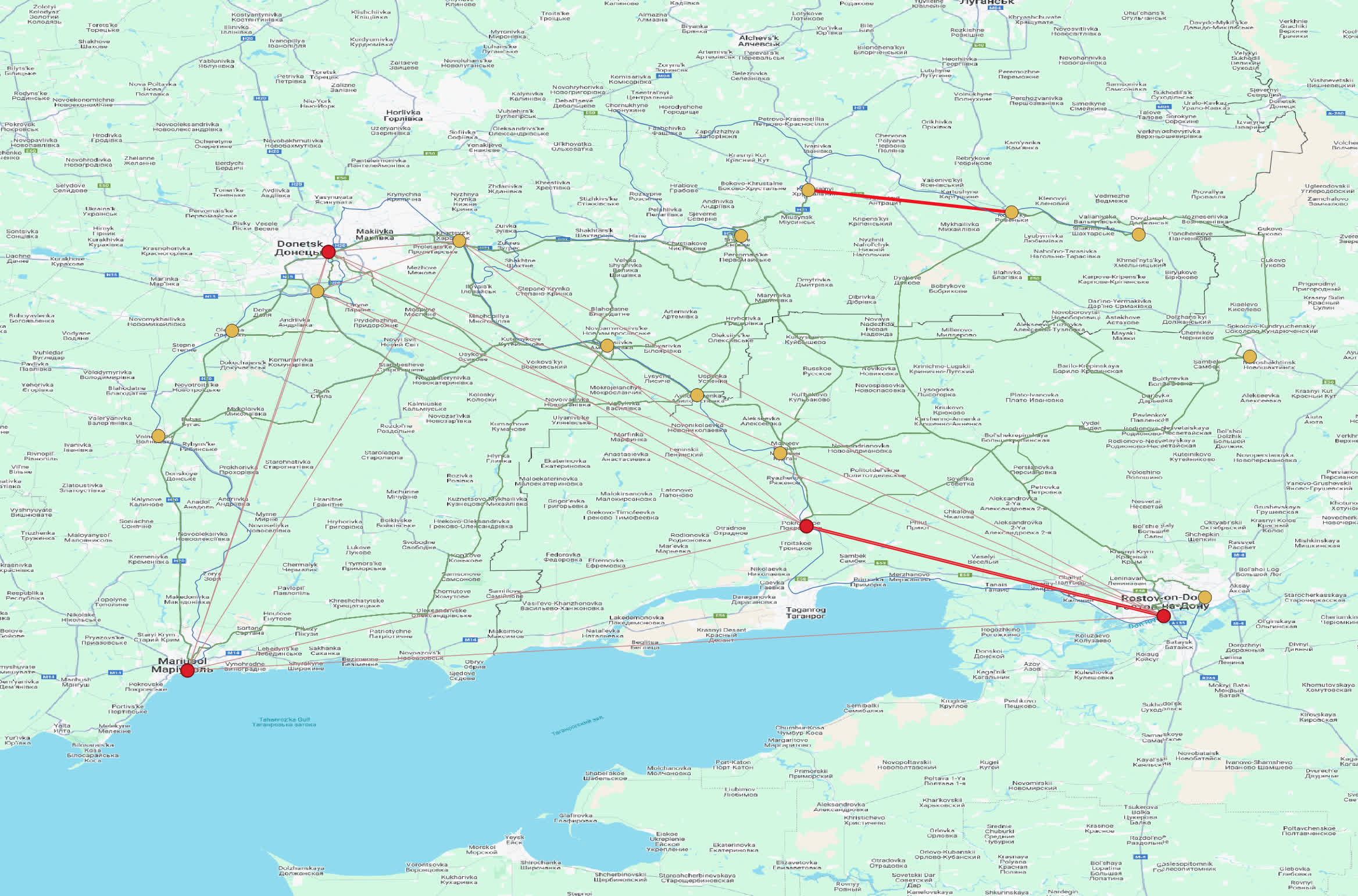}
\end{subfigure}
\hfill
\begin{subfigure}[t]{0.19\linewidth}
    \centering
    \includegraphics[width=\linewidth]{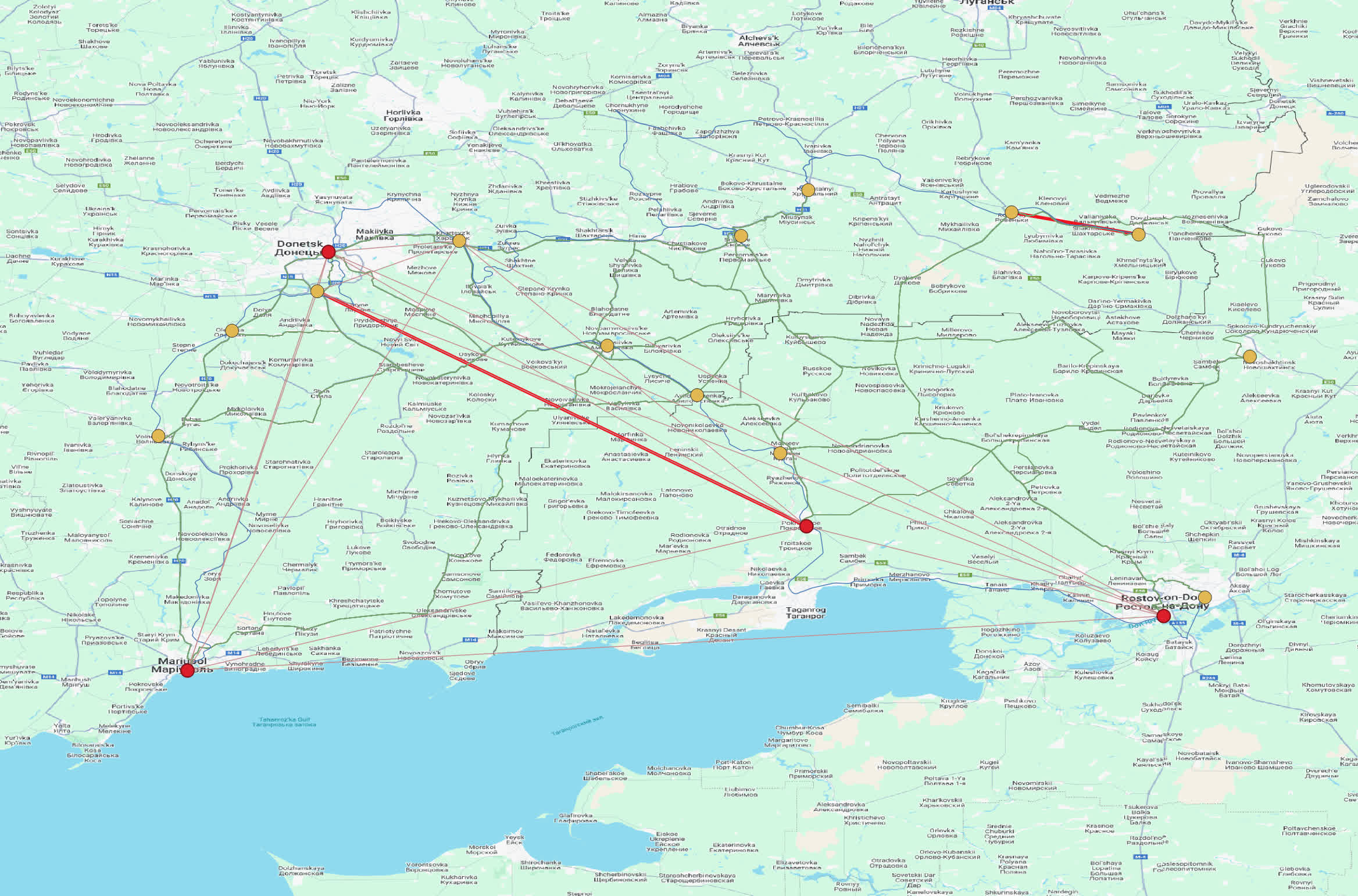}
\end{subfigure}
\hfill
\begin{subfigure}[t]{0.19\linewidth}
    \centering
    \includegraphics[width=\linewidth]{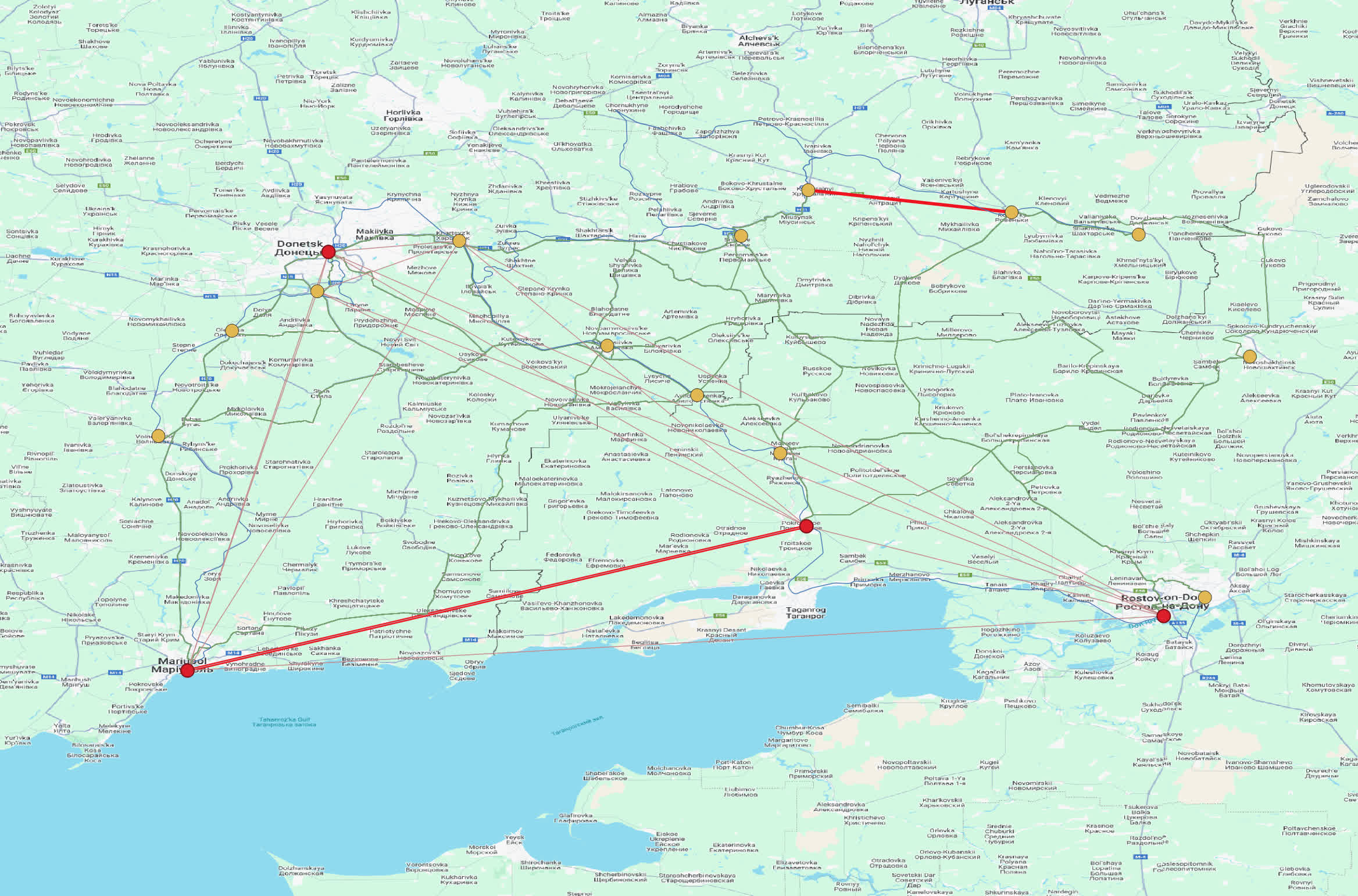}
\end{subfigure}
\hfill
\begin{subfigure}[t]{0.19\linewidth}
    \centering
    \includegraphics[width=\linewidth]{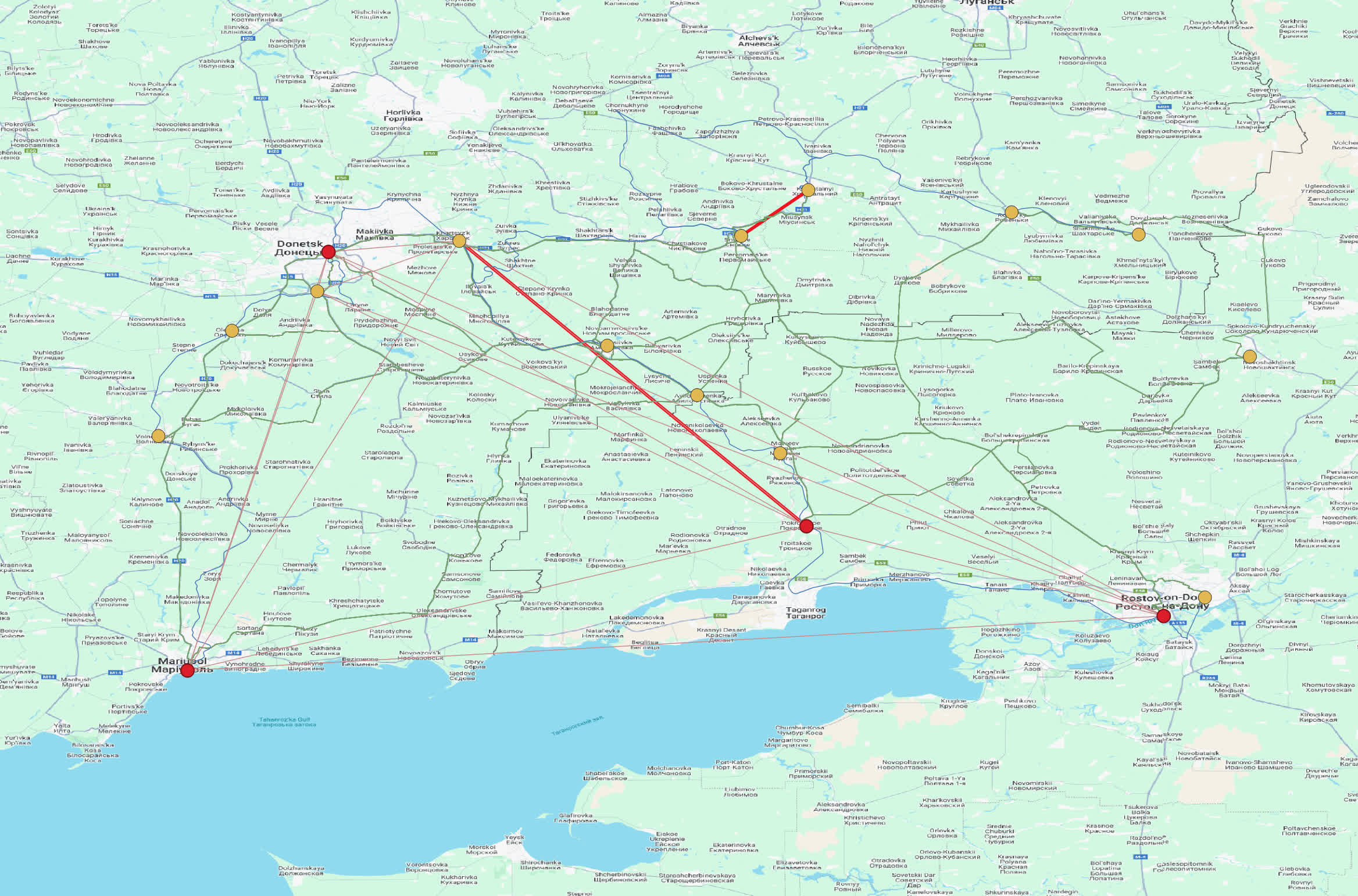}
\end{subfigure}
\hfill
\begin{subfigure}[t]{0.19\linewidth}
    \centering
    \includegraphics[width=\linewidth]{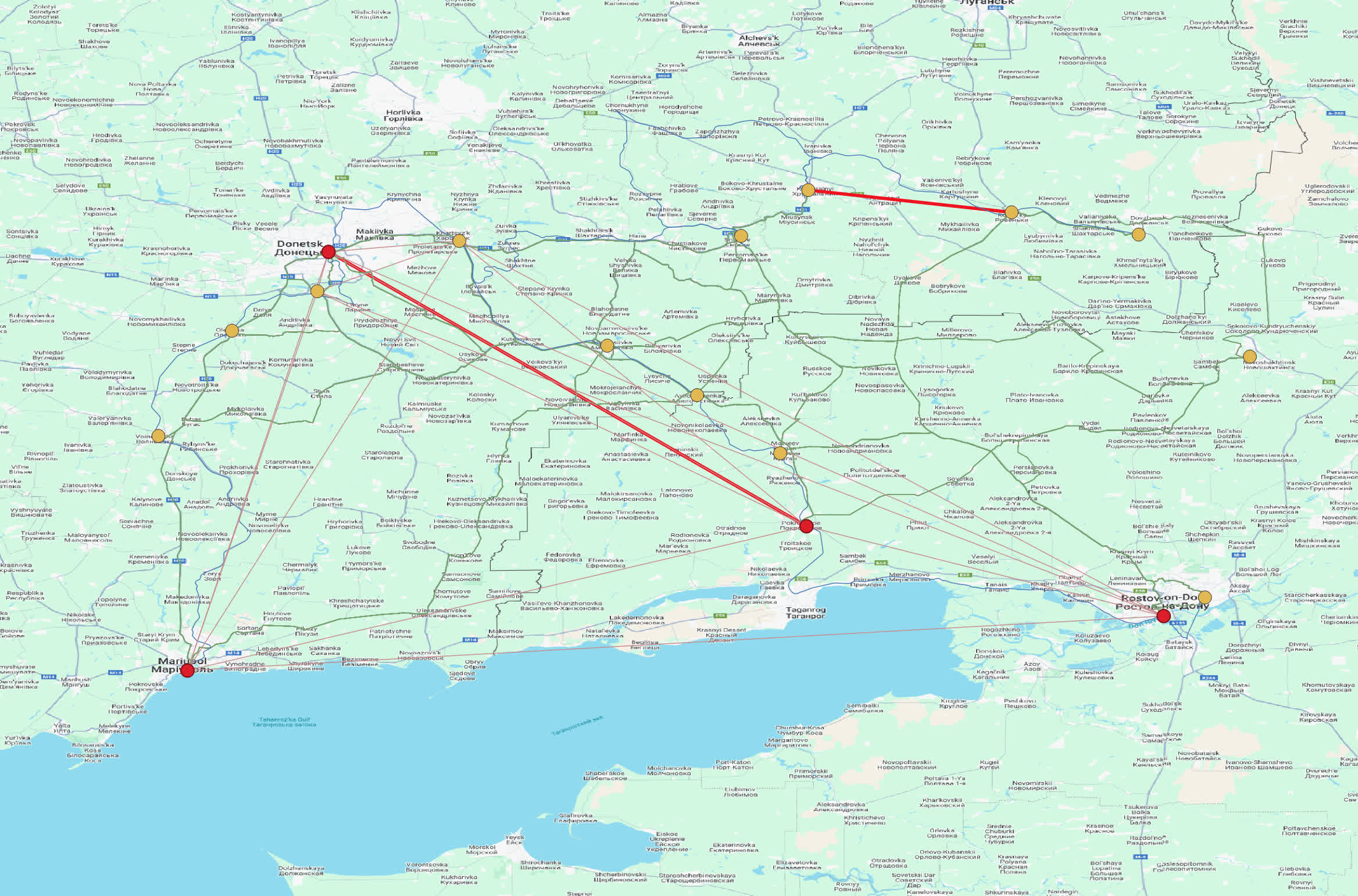}
\end{subfigure}
    \caption{NE for Red in the Ukraine scenario. Each subfigure corresponds to a single interdiction plan played with probabilities 0.462, 0.144, 0.131, 0.131 and 0.131, respectively.}
    \label{fig:ukraine:red}
\end{figure}

In the experiments, Blue has one connector of each type: a Train, a Truck, and a Plane, with each time step representing one hour. The Train starts at node 16, moving at 200 km/h, allowing it to traverse any edge in one time step. The Truck starts at node 6, moving at 100 km/h, taking one time step to cross most edges and two time steps for eight specific edges. The Plane also starts at node 6, moving at 300 km/h, reaching all destinations in one time step. All connectors have sufficient capacity for transferring available loads.

The railway-connected cities (nodes 1, 6, 8, 17) were considered warehouses. With the conflict pushing from the southeast to the northwest, demand locations were set in the northwest (nodes 1 and 8), and supply locations in the southeast (nodes 6, 7, and 17). Additional supply nodes included 3, 13, and 16, strategically chosen for their proximity to the Russian mainland. Supply nodes had packages of types A and B, with unit weights and volumes. Major supply nodes 6 and 17 had 5 units each of A and B, while nodes 3 and 7 had 3 units of A and 1 of B, node 13 had 2 units of both A and B, and node 16 had 3 units of both. The demand was 11 units of A and 13 units of B at node 1, and 14 units of A and 11 units of B at node 8. The payoff per unit was 1.3 at node 1 and 1.1 at node 8.

Consider Blue's optimal logistics plan for a 5-step horizon, assuming no presence of Red, as shown on the right side of Figure~\ref{fig:ukraine}. The Train's route is marked in blue, the Truck's in green, and the Plane's in red. The Plane delivers supplies to node 12, which the Train then transports to node 8. Meanwhile, the Truck collects packages from eastern warehouses and brings them to node 17, where they are loaded onto the Plane and flown to node 1. The value of this plan is 1.5956, but it is highly exploitable. Red can easily reduce the payoff to 0 by interdicting any two critical edges and cutting off deliveries to nodes 1 and 8.

To address this vulnerability, the equilibrium of the corresponding CL game, where Red can interdict any two edges, introduces randomization. Figure~\ref{fig:ukraine:blue} depicts Blue's mixed strategy, showing the probabilities of playing each of the five logistics plans. The paths are more randomized, and even the connectors responsible for final deliveries to demand nodes may change. For example, the Truck only follows its original route from the no-Red scenario in the least frequently played plan. More often, it delivers supplies directly to nodes 8 or 1. Similarly, Figure~\ref{fig:ukraine:red} shows Red's randomized interdiction strategy. As expected, Red consistently targets the Train, which is vulnerable due to movement constraints, and frequently interdicts the route between the region’s key cities, which serve as major transport hubs. In other cases, Red attempts to intercept the Plane. The value of this equilibrium is 1.0401, about 65\% of the optimal no-Red logistics value, but it is significantly more robust against adversarial actions.

\begin{table}[t]
    \centering
    \begin{minipage}{0.45 \linewidth}
    \centering
    \begin{tabular}{c|p{.18\linewidth}p{.18\linewidth}p{.18\linewidth}}
    Exp.$\backslash$True & 0 & 1 & 2 \\\hline
         0 & \textbf{31.7} & 8.33 & 0.0 \\
         1 & 25.8 & \textbf{16.7} & 6.67 \\
         2 & 23.3 & 15.9 & \textbf{9.26} \\
    \end{tabular}
    \end{minipage}
    \begin{minipage}{0.45 \linewidth}
    \centering
    \begin{tabular}{c|p{.18\linewidth}p{.18\linewidth}p{.18\linewidth}p{.18\linewidth}}
    Exp.$\backslash$True & 0 & 1 & 2 & 3 \\\hline
         0 & \textbf{1.60} & 0.40 & 0.0 & 0.0 \\
         1 & 1.60 & \textbf{1.20} & 0.78 & 0.38 \\
         2 & 1.44 & 1.12 & \textbf{1.04} & 0.64 \\
         3 & 1.32 & 1.10 & 0.99 & \textbf{0.89}
    \end{tabular}
    \end{minipage} \quad 
    \caption{The performance of the computed Blue's randomized logistics plans against less or more capable Red in (left) the UK scenario and (right) the Ukraine scenario. Rows correspond to the \textit{expected} budget used during the computation. Columns indicate the \textit{true} Red's budget.} 
    \label{tab:price_robustness}
\end{table}
\subsubsection{Price of Robustness}
It is natural to ask: what if Blue does not know Red's budget? What happens if Blue assumes Red has a high interdiction budget when, in fact, it has none, or vice versa? This leads to a discussion about the \textit{Price of Robustness}, which refers to the amount Blue sacrifices to be robust against Red. We present our findings in Table~\ref{tab:price_robustness} for both the UK and Mariupol scenarios. It is evident in both cases that underestimating Red's capabilities results in a significant drop in utility. For example, when Blue devises a logistics plan without considering Red (i.e., the first row of each table), its utility drops to zero with just a Red budget of 2. Conversely, adopting a more conservative logistics plan (i.e., assuming Red has a higher budget) leads to a relatively smaller drop in utility. For instance, in the UK scenario, the Price of Robustness is just 31.7-23.3, which is approximately one-quarter of the expected utility. A similar trend is observed in the Ukraine scenario.

\subsubsection{Comparisons Against a Non-Game-Theoretic Alternative}
We now compare our game-theoretic approach to a simple heuristic that does not explicitly account for Red. Consider the \textit{min-overlap} heuristic, which involves two hyperparameters: $k$, the minimum target payoff, and \# str, the number of logistics plans played with positive probability. The min-overlap heuristic identifies \# str logistics plans, each required to achieve a utility of $k$ under the assumption that \textit{Red does not exist}, while minimizing the maximum overlap across edges. Here, overlap on a given physical edge is the total number of connectors using that edge, summed across all \# str logistics plans. Blue then randomizes uniformly over these \# str logistics plans. The rationale is that by minimizing overlap among ``good'' logistics plans, no single edge will be excessively used by connectors across the logistics plans. We tested this min-overlap heuristic strategy against a best-responding Red and found that it performs poorly, as shown in Table~\ref{tab:heuristic}. For instance, in the UK scenario, none of the instances achieve more than half of the true Nash value (9.259). In the Ukraine scenario, the best-performing instance reached only 69\% of the Nash value. This suggests that seemingly reasonable heuristics may actually perform poorly in practice, and that counter-intuitive logistics plans may be necessary for optimal performance.

\begin{table}[t]
    \centering
    \begin{minipage}{0.45\linewidth}
    \begin{tabular}{c|ccccc}
    $k$ $\backslash$ \# str & 3 & 4 & 5 & 6 & 7 \\\hline
         10 & 1.11 & 1.67 & 1.33 & 1.94 & 1.67 \\
         20 & 3.89 & 2.92 & 3.33 & 2.50 & 4.29 \\
         30 & 2.22 & 3.33 & 2.33 & 2.22 & 3.10 \\
    \end{tabular}
    \end{minipage}
    \begin{minipage}{0.45\linewidth}
    \begin{tabular}{c|ccccc}
    $k$ $\backslash$ \# str & 5 & 10 & 15 & 20 & 25  \\\hline
         1.2 & 0.720 & 0.410 & 0.627 & 0.730 & 0.641  \\
         1.3 & 0.517 & 0.496 & 0.570 & 0.676 & 0.577 \\
         1.4 & 0.439 & 0.588 & 0.553 & 0.584 & 0.648  \\
         1.5 & 0.517 & 0.496 & 0.344 & 0.576 & 0.637 
    \end{tabular} 
    \end{minipage}
    \caption{Exploitability of the min-overlap heuristic strategies in (left) the UK scenario and (right) the Ukraine scenario.}
    \label{tab:heuristic}
\end{table}
\section{Conclusion}

In this paper, we introduced Contested Logistics games, a complex logistics problem that incorporates adversarial disruptions. Our model, formulated as a large two-player zero-sum one-shot game on a graph, identifies optimal logistics plans via a (randomized) Nash equilibrium. We demonstrated the computational complexity of finding these equilibria and proposed a practical double-oracle solver using best-response mixed-integer linear programs. Our experiments, conducted on both synthetic and real-world maps, confirm the scalability of our method for reasonably large games. Additionally, our ablation studies underscore the critical importance of explicitly modeling adversarial capabilities, rather than relying solely on heuristic-based logistics plans.


\begin{credits}
\subsubsection{\ackname} 
This research was supported by the Office of Naval Research award N00014-23-1-2374. Christian Kroer was additionally supported by the Office of Naval Research award N00014-22-1-2530, and the National Science Foundation awards IIS-2147361 and IIS-2238960.
\end{credits}
%
%
%
\bibliographystyle{splncs04}
\bibliography{library}
\end{document}